\documentclass[12pt,reqno,fleqn]{amsart}
\usepackage{latexsym}
\usepackage{amssymb}
\usepackage{amsxtra}
\usepackage{amsmath}
\usepackage{amsfonts}
\usepackage[dvips]{graphicx}
\usepackage{amsmath}
\usepackage{amssymb,amsmath,amsthm,amscd,mathrsfs,amsbsy,amsfonts}
\usepackage{pstricks,pst-node}
\usepackage{amsbsy,young,colortbl,cite}

\usepackage{graphicx}
\usepackage[metapost]{mfpic}
\usepackage{mflogo}

\usepackage{epstopdf}
\DeclareGraphicsRule{.tif}{png}{.png}{`convert #1 `basename #1 .tif`.png}

\usepackage{graphics}
\setlength{\abovecaptionskip}{1pt plus0.5pt minus0.5pt}

\makeatletter      
\@addtoreset{equation}{section}
\makeatother       

\def\@makecaption#1#2{\vskip\abovecaptionskip
  \sbox\@tempboxa{\small #1: #2}%
  \ifdim \wd\@tempboxa >\hsize \small #1: #2\par
  \else \global \@minipagefalse \hb@xt@\hsize{\hfil\box\@tempboxa\hfil}\fi
  \vskip\belowcaptionskip}
\textheight 23truecm
\textwidth 17truecm
\setlength{\topmargin}{-0.5 cm}
\setlength{\oddsidemargin}{-0.5 cm}
\setlength{\evensidemargin}{-0.5 cm}
\pagestyle{plain}

\newtheorem{theorem}{Theorem}[section]

\newtheorem{proposition}[theorem]{Proposition}
\newtheorem{corollary}[theorem]{Corollary}
\newtheorem{definition}[theorem]{Definition}

\theoremstyle{remark}


\renewcommand{\L}{\mathcal{L}}

\def\res{\mathop{\rm Res}\nolimits}
\def\({\left(}
\def\){\right)}
\def\[{\begin{eqnarray}}
\def\]{\end{eqnarray}}
\def\d{\partial}

\def\La{\Lambda}

\def\al{\alpha}

\begin{document}

\title{Additional symmetry of the modified extended Toda hierarchy }
\author{
ChuanZhong Li\dag,\  \ Jingsong He\ddag}
\allowdisplaybreaks
\dedicatory {\small Department of Mathematics,  Ningbo University, Ningbo, 315211, China\\
\dag lichuanzhong@nbu.edu.cn\\
\ddag hejingsong@nbu.edu.cn}
\thanks{\ddag Corresponding author}

\texttt{}

\date{}
\begin{abstract}

  In this paper, one new integrable modified extended Toda hierarchy(METH) is constructed with the help of two logarithmic Lax operators.  With this modification, the interpolated spatial flow is added to make all flows complete. To show more integrable properties of  the METH,  the bi-Hamiltonian structure and tau symmetry of the METH will be given.
 The  additional symmetry flows of this new hierarchy are presented.
These flows form an infinite dimensional Lie algebra of Block type.

\end{abstract}
\maketitle
\allowdisplaybreaks
\maketitle{\allowdisplaybreaks}
{\small \noindent {\bf Mathematics Subject Classifications (2000)}:  37K05, 37K10, 37K20, 17B65, 17B67.

\noindent{\bf Key words}:
Modified extended Toda hierarchy, bi-Hamiltonian structure, Additional symmetry, Block Lie algebra.
}
 \setcounter{section}{0}
\section{Introduction}
For integrable dynamical systems, the study of their symmetries plays a central role in their development.
In particular, the additional symmetry of nonlinear dynamical systems is one important subject in mathematics and physics.
 It  can lead to string equations and Virasoro constraints which
are important and extensively involved in the matrix models of the string theory\cite{Morozov1,Aratyn,Morozov2,Morozov3}.
Additional symmetries  of KP hierarchy were given by Orlov and
Shulman \cite{os1} through two novel operators $\Gamma$ and $M$, which can be used
to form a centerless  $W$ algebra. It is well known that there are two kinds of sub-hierarchies of KP, the BKP hierarchy
of which Virasoro constraints
and the ASvM formula were constructed in \cite{vandeLeur,vandeLeur2,tu,tu2} and a CKP hierarchy of which additional symmetries and string equation
were given in \cite{heLMP,kelei}.
Comparing with the KP hierarchy, the Toda type systems as important differential-difference hierarchies were proved to have the similar additional symmetry.
The
two dimensional Toda hierarchy
was constructed  by  Ueno and Takasaki \cite{UT} with the
help of difference
operators and infinite dimensional Lie algebras. The additional symmetry of the two dimensional Toda hierarchy was studied using ASvM formula in \cite{asv2}.
The one dimensional Toda hierarchy (TH) was also studied under a
 certain reduction condition  on two Lax operators. The bigraded Toda hierarchy (BTH) of $(N,M)$-type is the generalized Toda hierarchy
whose infinite Lax matrix has $N$ upper and $M$ lower nonzero diagonals\cite{solutionBTH}. One interesting thing is that it is found that the additional symmetry of the BTH composed a novel Block type algebraic structure. From above, it seems that the additional symmetry is one universal structure for integrable systems.

Besides above integrable systems related to the Toda system, there exist several kinds of extension of Toda hierarchy which is expressed with extended logarithmic flows. For example, recently in
\cite{CDZ}, the interpolated Toda lattice
hierarchy was generalized to the so-called extended Toda hierarchy (ETH) for considering
its application on the topological field theory. Later the Hirota quadratic equations of the ETH was constructed in \cite{M} with the help of generalized Vertex operators. In \cite{C}, the TH and ETH were further
generalized to the extended bigraded Toda hierarchy (EBTH) by considering $N+M$ dependent variables
in the Lax operator.
 In \cite{TH},
the Hirota bilinear equations (HBEs) of EBTH have been given
conjecturally and proved that it governs the Gromov-Witten theory of
orbifold $c_{km}$. In \cite{ourJMP}, the authors generalize the
Sato theory to the EBTH and give the Hirota bilinear equations in
terms of vertex operators whose coefficients  take values in the
algebra of differential operators. The Hirota bilinear equation of EBTH were equivalently constructed in a very recent paper \cite{leurhirota}, because of the equivalence of $t_{1,N}$ flow and $t_{0,N}$ flow of EBTH in \cite{ourJMP}. Meanwhile it was proved to govern Gromov-Witten invariant of the total
  descendent potential of $\mathbb{P}^1$ orbifolds \cite{leurhirota} which finished the conjecture proposed in \cite{TH}.
But from the point of additional symmetry, it is a pity that these two kinds of hierarchies which both contain logarithmic extended flows have no natural additional
 symmetry using the standard constructing method in \cite{os1} because of the single logarithmic operator.
Because of the universality of additional symmetry for integrable systems and the importance of extended flows in Gromov-Witten theory, in this paper, we will modify the ETH and construct one new kind of integrable system named as the modified extended Toda hierarchy (METH).  We further prove that this new hierarchy have additional symmetries which also form a nice Block Lie algebra.
Infinite dimensional Lie algebras of Block type,
as generalizations of the well-known Virasoro algebra,  have
been studied intensively in literature  \cite{Block,DZ,Su}.  In \cite{ourBlock}, we provide  this kind of Block type algebraic structure for the bigraded Toda hierarchy (BTH) \cite{ourJMP,solutionBTH}.
Later on, this Block type Lie algebra is found again in dispersionless bigraded Toda
 hierarchy \cite{dispBTH}.

This paper is arranged as follows. In Section 2 we define the modified extended Toda hierarchy
with two separate logarithms of the Lax operator $\L$.
In the Section 3, the bi-Hamiltonian structure and tau symmetry of the METH will be given to prove the integrability of the METH.  The additional
symmetries will be constructed in Section 4 and they form the well-known Block algebra. At last, we will give further conclusion and discussion.

\section{The modified extended Toda hierarchy}
Set a Lax operator be the following Laurent polynomial \cite{CDZ}
\begin{equation}\label{LMETH}
\L:=\Lambda+u+e^{v(x)}\Lambda^{-1},
\end{equation}
where the shift operator $\Lambda$ acts on a function $a(x)$ by $\Lambda a(x) = a(x + \epsilon)$,
i.e. $\Lambda$ is equivalent to  $e^{\epsilon\partial_{x}}$ where
the spacing unit $``\epsilon"$ is called string coupling constant.
 The Lax operator $\L$ can be
written in two different ways by dressing the shift operator
\begin{equation}\label{two dressing}
\L=S\Lambda S^{-1} = \bar S^{-1} \Lambda^{-1}\bar S^{-1},
\end{equation}
where the dressing operators have the form,
 \begin{align}
 S&=1+w_1\Lambda^{-1}+w_2\Lambda^{-2}+\ldots,
\label{dressP}\\[0.5ex]
 \bar S&=\tilde{w_0}+\tilde{w_1}\Lambda+\tilde{w_2}\Lambda^2+ \ldots.
\label{dressQ}
\end{align}
 The
pair is unique up to multiplying $S$ and $\bar S$ from the right
 by operators in the form  $1+
a_1\Lambda^{-1}+a_2\Lambda^{-2}+...$ and $\tilde{a}_0 +
\tilde{a}_1\Lambda +\tilde{a}_2\Lambda^2+\ldots$ respectively with
coefficients independent of $x$. Given any difference operator $A=
\sum_k A_k \Lambda^k$, the positive and negative projections are
defined by $A_+ = \sum_{k\geq0} A_k \Lambda^k$ and $A_- = \sum_{k<0}
A_k \Lambda^k$. To construct the METH, we define the following logarithm operators
\begin{align}
\log_+\L&=(S\cdot\epsilon \partial\cdot S^{-1})= \epsilon \partial+\sum_{k < 0} W_k(x) \Lambda^k=-\log (1+(\Lambda^{-1}-1))+\sum_{k < 0} W_k(x) \Lambda^k,\\
\log_-\L&=-(\bar S\cdot\epsilon \partial\cdot \bar S^{-1})=-\epsilon \partial+\sum_{k \geq 0} W_k(x) \Lambda^k=-\log (1+(\Lambda-1))+\sum_{k \geq 0} W_k(x) \Lambda^k ,
\end{align}
where $\d$ is the derivative about spatial variable $x$.
Therefore two logarithmic operators can be rewritten in the following forms
\[\log_+\L&=\sum_{k \leq 0} \tilde W_k(x) \Lambda^k,\ \
\log_-\L=\sum_{k \geq 0} \tilde W_k(x) \Lambda^k.\]

\begin{definition} \label{deflax}
The Lax formulation of modified extended  Toda hierarchy can be given
by
\begin{equation}
  \label{edef}
\frac{\partial \L}{\partial t_{\alpha, n}} = [ A_{\alpha,n} ,\L ],
\end{equation}
for $\alpha =0,1,2$ and $n \geq 0$. The operators
$A_{\alpha ,n}$ are defined by
\begin{subequations}
\label{Adef}
\begin{align}
  &A_{0,n} = \frac{1}{(n+1)!}\L^{n+1}_+,\\
  &A_{1,n} = \frac{2}{ n!} [ \L^n (\log_+ \L -  c_n ) ]_+, \\
  &A_{2,n} =-\frac{2}{(n+1)!} [\L^{n+1} (\log_- \L - c_{n+1} ) ]_- .
\end{align}
\end{subequations}
The constants $c_n$ are defined by
\begin{equation}
  \label{b24}
  c_n = \sum_{k=1}^n \frac1k ,  c_0=0 .
\end{equation}
\end{definition}

{\bf Remark 1:}
The difference between the METH and the ETH \cite{CDZ} is that we separate two different logarithmic operators to construct two independent extended flows. For the ETH, the extended flow (see $t^{1,q}$ flow in eq.(2.27) of \cite{CDZ})is constructed with a single logarithm which is a combination of $\log_+ \L$ and $\log_- \L$. But this kind of construction becomes an obstacle to construct additional symmetry. That is why we construct two separated logarithmic flows for the METH. That combination exactly canceled the $\d$ operator technically. The $t_{1, 0}$ flow of the METH is exactly the spatial flow which make all the flows complete. This is the same as the ETH.

For the convenience, we give following notations
\begin{align}
  &B_{0,n} = \frac{1}{(n+1)!}\L^{n+1},\\
  &B_{1,n} = \frac{2}{ n!}  \L^n (\log_+ \L -  c_n ), \\
  &B_{2,n} =-\frac{2}{(n+1)!} \L^{n+1} (\log_- \L - c_{n+1} ) .
\end{align}
One can show that the METH in the Lax representation can be written in the equations
of the dressing operators (i.e. the Sato equations):
\begin{proposition}
\label{t1} The operator $\L$ in \eqref{LMETH} is a solution to the METH \eqref{edef} if and only if there is a
pair of dressing operators $S$ and $\bar S$ which satisfy the Sato  equations,
\begin{equation}
\label{bn1}
\d_{\gamma,n}S  =- (B_{\gamma,n})_-  S, \qquad  \d_{\gamma,n}\bar S = (B_{\gamma ,n})_+\bar S
\end{equation}
for $0\leq\gamma\leq 2$ and  $n \geq 0.$
\end{proposition}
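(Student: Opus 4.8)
The plan is to prove the two implications of the equivalence separately, the whole argument resting on one structural fact: each operator $B_{\gamma,n}$ commutes with $\L$. So the first thing I would establish is this commutativity. Since $\log_+\L = S\,\ep\d\,S^{-1}$ and $\L = S\La S^{-1}$ share the same dressing $S$, one has $[\log_+\L,\L] = S[\ep\d,\La]S^{-1} = 0$ because $\La = e^{\ep\d}$ commutes with $\ep\d$; the identical computation through $\bar S$ gives $[\log_-\L,\L]=0$. As every $B_{\gamma,n}$ is built only from $\L$, $\log_\pm\L$ and the scalars $c_n$, it follows that $[B_{\gamma,n},\L]=0$ for all $\gamma,n$, and hence $[(B_{\gamma,n})_+,\L] = -[(B_{\gamma,n})_-,\L]$. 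This last identity is the device that lets me pass freely between the positive and negative projections, and it is precisely what reconciles the three different projection choices in the definition of $A_{\gamma,n}$.

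For the direction \emph{Sato $\Rightarrow$ Lax}, I would differentiate the two dressing presentations of $\L$ in \eqref{two dressing}. Differentiating $\L = S\La S^{-1}$ gives $\d_{\gamma,n}\L = [(\d_{\gamma,n}S)S^{-1},\L]$, and substituting the first equation of \eqref{bn1} yields $\d_{\gamma,n}\L = -[(B_{\gamma,n})_-,\L]$; differentiating $\L = \bar S\La^{-1}\bar S^{-1}$ and using the second equation of \eqref{bn1} yields $\d_{\gamma,n}\L = [(B_{\gamma,n})_+,\L]$. By the commutativity established above these two expressions agree, and matching the surviving projection against the definition of $A_{\gamma,n}$ in each of the three cases reduces the common value to $[A_{\gamma,n},\L]$, which is exactly \eqref{edef}.

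For the converse \emph{Lax $\Rightarrow$ Sato} I would run this computation backwards, using the uniqueness of the dressing pair asserted after \eqref{dressQ}. Given \eqref{edef}, I form $X := (\d_{\gamma,n}S)S^{-1} + (B_{\gamma,n})_-$. Because $S = 1 + O(\La^{-1})$, the term $(\d_{\gamma,n}S)S^{-1}$ has order $\le -1$, so $X$ is purely negative; moreover the Lax equation together with $[B_{\gamma,n},\L]=0$ forces $[X,\L]=0$ (immediate for the flows $\gamma=0,1$ where $A_{\gamma,n}=(B_{\gamma,n})_+$, and for $\gamma=2$ after rewriting $[A_{2,n},\L]$ through the commutativity identity so that the negative projection lines up). A purely negative-order operator commuting with $\L$ is incompatible with the normalization $S = 1 + O(\La^{-1})$ unless it is zero, so $X=0$ and the first Sato equation follows; the second is obtained symmetrically from $\bar X := (\d_{\gamma,n}\bar S)\bar S^{-1} - (B_{\gamma,n})_+$, an operator of order $\ge 0$ that commutes with $\L$ and must therefore vanish against the normalization \eqref{dressQ}.

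I expect the main obstacle to be twofold. The delicate structural point is the commutativity of the logarithmic operators with $\L$: unlike the purely polynomial flow $A_{0,n}$, the operators $A_{1,n}$ and $A_{2,n}$ carry the $\ep\d$ piece hidden in $\log_\pm\L$, and one must verify that this derivative term genuinely drops out of the bracket with $\L$, which it does exactly because $[\ep\d,\La^{\pm1}]=0$. The second delicate point sits in the converse, where one must rule out a nontrivial element of the centralizer of $\L$: the negative-order operator $X$ and the nonnegative-order operator $\bar X$ must be shown to vanish, and this is where the uniqueness of $S,\bar S$ up to $x$-independent right factors, together with the fixed normalizations in \eqref{dressP} and \eqref{dressQ}, does the real work. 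A subsidiary but genuine nuisance is the sign bookkeeping for the negative-projection flow $\gamma=2$ associated with $\log_-\L$, where the roles of the two projections are interchanged relative to the $\gamma=0,1$ flows; here the commutativity identity $[(B_{\gamma,n})_+,\L]=-[(B_{\gamma,n})_-,\L]$ is what keeps both dressings producing one and the same evolution of $\L$.
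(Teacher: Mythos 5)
The paper states this proposition without proof (``One can show that\dots''), so your attempt can only be judged against the standard Sato-theory argument it is implicitly invoking. Your forward direction (Sato $\Rightarrow$ Lax) is essentially right and is the standard one: everything rests on $[\log_\pm\L,\L]=0$, obtained by dressing $[\ep\d,\La^{\pm1}]=0$, hence $[B_{\gamma,n},\L]=0$ and $[(B_{\gamma,n})_+,\L]=-[(B_{\gamma,n})_-,\L]$. One caveat: for $\gamma=2$ your ``sign bookkeeping works out'' claim is not true of the paper's literal definitions. Since $A_{2,n}=(B_{2,n})_-$ by \eqref{Adef}, the Sato equations \eqref{bn1} give $\d_{2,n}\L=[(B_{2,n})_+,\L]=-[A_{2,n},\L]$, i.e.\ the \emph{negative} of \eqref{edef}; this is an internal sign inconsistency of the paper itself (compare \eqref{Adef} against \eqref{edef3}), and it should be flagged and repaired, not asserted to cancel.

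The genuine gap is in your converse (Lax $\Rightarrow$ Sato). The step ``a purely negative-order operator commuting with $\L$ must vanish'' is false: the centralizer of $\L$ consists of all $S\bigl(\sum_k c_k\La^k\bigr)S^{-1}$ with $x$-independent $c_k$, so for instance $\L^{-1}=S\La^{-1}S^{-1}$, and more generally $S\bigl(\sum_{k\le -1}c_k(t)\La^k\bigr)S^{-1}$, are nonzero, purely of negative order, and commute with $\L$; on the $\bar S$ side even a scalar $c(t)\cdot 1$ defeats the argument. In fact the uniqueness statement after \eqref{dressQ} cuts against you rather than for you: dressing operators are unique only up to right multiplication by $x$-independent operators, so if $S$ satisfies \eqref{bn1} then $S\,G(t)$, with $G=1+a_1(t)\La^{-1}+\cdots$ $x$-independent but time-dependent, dresses the very same Lax solution $\L$ yet violates \eqref{bn1}, and its defect $X=(\d_{\gamma,n}(SG))(SG)^{-1}+(B_{\gamma,n})_-$ is exactly a nonzero negative-order element of the centralizer. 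What the Lax equations actually give you is only that $C_{\gamma,n}:=S^{-1}\d_{\gamma,n}S+S^{-1}(B_{\gamma,n})_-S$ commutes with $\La$, i.e.\ has $x$-independent coefficients. The missing, and essential, part of the proof is to use the gauge freedom to remove these defects: show that the $C_{\gamma,n}$ form a flat connection in the time variables (this is where the Zakharov--Shabat/commutativity structure of the flows enters), integrate $\d_{\gamma,n}G=-C_{\gamma,n}G$ for an $x$-independent $G=1+O(\La^{-1})$, and replace $S$ by $SG$ (and similarly correct $\bar S$). This is the Ueno--Takasaki/Carlet--Dubrovin--Zhang argument; without it the ``only if'' half of the proposition is not established.
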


In the next section, to to show more integrable properties of  the METH,  the bi-Hamiltonian structure and tau symmetry of the METH will be given.

\section{Bi-Hamiltonian structure and tau symmetry}

The bi-Hamiltonian structure for the
modified extended Toda hierarchy can be given by the following two compatible Poisson brackets which has the same form as extended Toda hierarchy in \cite{CDZ}

\begin{eqnarray}
&&\{v(x),v(y)\}_1=\{u(x),u(y)\}_1=0,\notag\\
&&\{u(x),v(y)\}_1=\frac{1}{\epsilon} \left[e^{\epsilon\,\d_x}-1
\right]\delta(x-y),\label{toda-pb1}\\
&& \{u(x),u(y)\}_2={1\over\epsilon}\left[e^{\epsilon\,\d_x}
e^{v(x)}-
e^{v(x)} e^{-\epsilon\,\d_x}\right] \delta(x-y),\notag\\
&& \{ u(x), v(y)\}_2 = {1\over \epsilon}
u(x)\left[e^{\epsilon\,\d_x}-1 \right]
\delta(x-y)\label{toda-pb2}\\
&& \{ v(x), v(y)\}_2 = {1\over \epsilon} \left[
e^{\epsilon\,\d_x}-e^{-\epsilon\,\d_x}\right]\delta(x-y).\notag
\end{eqnarray}

For any difference operator $A=
\sum_k A_k \Lambda^k$, define residue $\res A=A_0$.
In the following theorem, we will prove the above Poisson structure can be as the the Hamiltonian structure of the METH.
\begin{theorem}
The flows of the modified extended Toda hierarchy  are Hamiltonian systems
of the form
\begin{equation}
\frac{\d u}{\d t_{\beta,n}}=\{u,H_{\beta,n}\}_1, \  \frac{\d v}{\d t_{\beta,n}}=\{v,H_{\beta,n}\}_1,\quad \beta=0,1,2;\ n\ge 0.
\label{td-ham}
\end{equation}
They satisfy the following bi-Hamiltonian recursion relation
\[\notag
\{\cdot,H_{1,n-1}\}_2&=&n
\{\cdot,H_{1,n}\}_1+2\{\cdot,H_{0,n-1}\}_1,\ \{\cdot,H_{0,n-1}\}_2=(n+1)
\{\cdot,H_{0,n}\}_1\\ \label{recursion}
\{\cdot,H_{2,n-1}\}_2&=&n
\{\cdot,H_{2,n}\}_1+2\{\cdot,H_{2,n-1}\}_1.
\]
Here the Hamiltonians have the form
\begin{equation}
H_{\beta,n}=\int h_{\beta,n}(u,v; u_x,v_x; \dots; \epsilon) dx,\quad \beta=0,1,2; \ n\ge 0,
\end{equation}
with the Hamiltonian densities $h_{\beta,n}=h_{\beta,n}(u,v; u_x,v_x; \dots; \epsilon)$ given by
\[
 h_{0,n}&=&\frac1{(n+1)!}\res \, \L^{n+1},\ h_{1,n}=\frac2{n!}\,\res\left[ \L^{n}
(\log_+ \L-c_{n})\right],\\
 h_{2,n}&=&\frac2{n!}\,\res\left[ \L^{n}
(\log_- \L-c_{n})\right].
\]

\end{theorem}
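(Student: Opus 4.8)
The plan is to adapt the bi-Hamiltonian scheme of \cite{CDZ} to the present setting with two separate logarithms. First I would convert the Lax equations \eqref{edef} into explicit evolutionary equations for the fields $u$ and $v$. Since $\L=\La+u+e^{v}\La^{-1}$, its variation $\delta\L=\delta u+e^{v}\,\delta v\,\La^{-1}$ has only $\La^{0}$ and $\La^{-1}$ components, so the commutator $[A_{\be,n},\L]$ must reproduce this constrained shape; reading off its $\La^{0}$ and $\La^{-1}$ coefficients (the higher ones vanishing by consistency) yields $\d_{t_{\be,n}}u$ and $e^{v}\d_{t_{\be,n}}v$. The aim of this step is to recast each flow as $\d_{t_{\be,n}}(u,v)^{T}=P_{1}\,(\delta H_{\be,n}/\delta v,\ \delta H_{\be,n}/\delta u)^{T}$, where $P_{1}$ is the Poisson operator encoded by \eqref{toda-pb1}.

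The central computational ingredient is a gradient formula for the residue Hamiltonians. For a function $f(\L)$ I would use $\delta\,\res f(\L)=\res\!\big(f'(\L)\,\delta\L\big)$ under $\int dx$, together with the elementary bookkeeping $\res(X\,\delta u)=(\res X)\,\delta u$ and $\res(X\,e^{v}\delta v\,\La^{-1})=X_{1}\,\La(e^{v}\delta v)$, where $X_{1}$ denotes the $\La^{1}$-coefficient of $X$. After integrating by parts this gives, for instance,
\begin{equation*}
\frac{\delta H_{0,n}}{\delta u}=\frac1{n!}\res\L^{n},\qquad
\frac{\delta H_{0,n}}{\delta v}=\frac1{n!}\,e^{v}\,\La^{-1}\big(\text{the }\La^{1}\text{-coefficient of }\L^{n}\big),
\end{equation*}
and analogous formulas for the logarithmic densities $h_{1,n},h_{2,n}$. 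Substituting these gradients into the first bracket \eqref{toda-pb1} and comparing with the flow equations of the first step verifies the Hamiltonian form \eqref{td-ham}; the comparison works precisely because $\res[A_{\be,n},\L]$ is a total $\La$-difference, so it integrates to zero and the residues are well defined as densities.

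For the bi-Hamiltonian recursion I would exploit the algebraic relations among the generators $B_{\be,n}$. Using $c_{n}=c_{n-1}+\tfrac1n$, a direct computation gives the operator identities
\begin{equation*}
\L\,B_{0,n-1}=(n+1)B_{0,n},\qquad \L\,B_{1,n-1}=n\,B_{1,n}+2\,B_{0,n-1},
\end{equation*}
and the analogous identity for $B_{2,n}$. Since the two brackets \eqref{toda-pb1}--\eqref{toda-pb2} are compatible and the recursion operator $R=P_{2}P_{1}^{-1}$ raises the index $n$ in exactly the way mirrored by multiplication by $\L$ on the $B_{\be,n}$, these operator identities translate termwise into the Hamiltonian recursion \eqref{recursion}. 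The clean way to organize this is to isolate, as a lemma, the statement that $P_{2}\,\mathrm{grad}\,H_{\be,n-1}$ equals $P_{1}$ applied to the gradient of the $\L$-shifted Hamiltonian, after which the recursion collapses to the three displayed operator identities.

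The hard part will be the treatment of the logarithmic terms. Unlike $\res\L^{k}$, the densities $\res[\L^{n}(\log_{\pm}\L-c_{n})]$ require varying the logarithm, and $\delta\log_{\pm}\L$ is not simply $\L^{-1}\delta\L$, because $\log_{+}\L=S(\ep\d)S^{-1}$ and $\log_{-}\L=-\bar S(\ep\d)\bar S^{-1}$ carry the $\ep\d$ piece. I would control these variations using the commutativity $[\log_{\pm}\L,\L]=0$, which follows from $[\ep\d,\La^{\pm1}]=0$ after dressing, to move factors past one another inside the residue, together with the Sato equations \eqref{bn1} to evaluate $\d_{t_{\be,n}}\log_{\pm}\L$. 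Keeping the $\log_{+}$ and $\log_{-}$ contributions separate---which is exactly the modification advertised in Remark~1---is what makes these variations tractable, since the unwanted $\ep\d$ contributions are removed by the projections $(\ \cdot\ )_{\pm}$ rather than having to cancel between the two logarithms.
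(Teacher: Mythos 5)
Your proposal follows the same skeleton as the paper's proof: rewrite the Lax equations \eqref{edef} in component form for $u,v$, compute variational gradients of the residue Hamiltonians and match them against the first bracket \eqref{toda-pb1}, and reduce the recursion \eqref{recursion} to the operator identities $\L B_{0,n-1}=(n+1)B_{0,n}$, $\L B_{1,n-1}=nB_{1,n}+2B_{0,n-1}$, which are literally the ``trivial identities'' used in the paper. The genuine gap is in the step you yourself flag as hard, and the tools you propose for it would fail. The paper handles the variation of the logarithmic densities by invoking the identity proved in \cite{CDZ},
\begin{equation*}
\res\left[\L^n\, d(S\ep\d_x S^{-1})\right]\sim\res\left(\L^{n-1}d\L\right),
\end{equation*}
together with its $\bar S$-analogue \eqref{dlgl-3}, where $\sim$ means modulo $x$-derivatives of \emph{local} densities. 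Neither of your substitutes yields this. The Sato equations \eqref{bn1} only give derivatives of $S,\bar S$ along the hierarchy flows, whereas the gradient computation requires an arbitrary variation $d\L=du+e^{v}dv\,\La^{-1}$, so they are the wrong tool. More seriously, ``using $[\log_\pm\L,\L]=0$ to move factors past one another inside the residue'' is precisely the kind of manipulation that is forbidden here: since $d\L=[dS\,S^{-1},\L]$ and $d\log_+\L=[dS\,S^{-1},\log_+\L]$, the identical cycling argument gives
\begin{equation*}
\res\left(\L^{n-1}d\L\right)=\res\left(\L^{n-1}[dS\,S^{-1},\L]\right)\sim\res\left([\L,\L^{n-1}]\,dS\,S^{-1}\right)=0,
\end{equation*}
i.e.\ it ``proves'' that every Hamiltonian, including the polynomial ones, has vanishing gradient --- absurd, since $\delta H_{0,1}/\delta u=u$. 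The fallacy is that $dS\,S^{-1}$ and the tails of $\log_\pm\L$ have coefficients that are nonlocal in $(u,v)$ (already $w_1=(1-\La)^{-1}u$), so the total differences produced by the cyclic-trace property are $x$-derivatives of \emph{nonlocal} quantities and cannot be discarded. The entire content of the CDZ identity is that the logarithmic $1$-form is nevertheless equivalent to a local one; that requires the careful computation of \cite{CDZ}, not commutativity bookkeeping.

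The gap is not cosmetic, because the nonzero right-hand side $\res(\L^{n-1}d\L)$ coming from the logarithm's variation is exactly what produces the constant shift $c_n\to c_{n-1}$ in the paper's computation
\begin{equation*}
dh_{1,n}\sim\frac{2}{(n-1)!}\,\res\left[\L^{n-1}\left(\log_+\L-c_{n-1}\right)d\L\right],
\end{equation*}
which in turn gives $\frac{\delta H_{1,n}}{\delta u}=a_{1,n;0}(x)$ and $\frac{\delta H_{1,n}}{\delta v}=a_{1,n;1}(x-\ep)e^{v(x)}$, i.e.\ the match with \eqref{toda-pb1} and the precise factor $2\{\cdot,H_{0,n-1}\}_1$ in \eqref{recursion}. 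If the logarithmic variation is dropped (which is what your cycling yields) or mishandled, you obtain the gradient with $c_n$ in place of $c_{n-1}$, and both the Hamiltonian form \eqref{td-ham} and the recursion constants fail. A further minor point: your remark that ``the unwanted $\ep\d$ contributions are removed by the projections $(\,\cdot\,)_\pm$'' does not apply to this theorem, since the densities $h_{1,n},h_{2,n}$ contain the \emph{unprojected} operators $\L^{n}(\log_\pm\L-c_n)$; in this paper the $\ep\d$ parts are instead absorbed by re-expanding $\ep\d$ as $-\log\bigl(1+(\La^{\mp1}-1)\bigr)$, and then dealt with inside the residue calculus. To repair your argument, state the two equivalence relations above as a lemma --- citing \cite{CDZ} for the $+$ case and adapting that computation for the $-$ case, exactly as the paper does --- after which the rest of your outline goes through.
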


\begin{proof}
For $\beta=0$, i.e. the original Toda hierarchy, the proof is same as the proof in \cite{CDZ}.

Here we will prove that the flows $\frac{\d}{\d t_{1,n}}$ are also
Hamiltonian systems with respect to the first Poisson bracket.
In \cite{CDZ}, the following identity has been proved
\begin{equation}\label{dlgl-2} \res\left[\L^n d (S\epsilon \d_x
S^{-1})\right] \sim \res \L^{n-1} d \L,
\end{equation}
which show the validity of the following equivalence relation:
\begin{equation}\label{dlgl}
\res\left(\L^n\, d \log_+ \L\right) \sim \res\left(\L^{n-1} d \L\right).
\end{equation}
Here the equivalent relation $\sim$ is up to a $x$-derivative of
another 1-form.

In a similar way as eq.\eqref{dlgl-2}, we obtain the following equivalence relation
\begin{equation}\label{dlgl-3}
\res\left[\L^n d (\bar S\epsilon\d_x \bar S^{-1})\right]\sim -\rm \res \L^{n-1} d \L£¬
\end{equation}
i.e.
\begin{equation}\label{dlgl}
\res\left(\L^n\, d \log_- \L\right) \sim \rm \res\left(\L^{n-1} d \L\right).
\end{equation}
Suppose
\[
A_{\alpha,n}=\sum_{k} a_{\alpha,n+1;k}\, \Lambda^k,
\]
Then from
\begin{equation}
  \label{edef3}
\frac{\partial \L}{\partial t_{\alpha, n}} = [ (B_{\alpha,n})_+ ,\L ]= [ -(B_{\alpha,n})_- ,\L ],
\end{equation}
we can derive equation
\[\epsilon\frac{\partial u}{\partial t_{\beta, n}}&=&a_{\beta,n;1}(x+\epsilon)-a_{\beta,n;1}(x)\\
\epsilon\frac{\partial v}{\partial t_{\beta, n}}&=&a_{\beta,n;0}(x-\epsilon) e^{v(x)}-a_{\beta,n;0}(x) e^{v(x+\epsilon)}.
\]

The equivalence relation (\ref{dlgl}) now readily follows from the above two equations.
By using (\ref{dlgl}) we obtain
\begin{eqnarray}
&&d h_{1,n}=\frac2{n!}\,d\,\res\left[\L^{n}
\left(\log_+ \L-c_{n}\right) \right]
\notag\\
&& \sim \frac2{(n-1)!}\,\res\left[\L^{n-1}
\left(\log_+ L-c_{n}\right) d L\right]+ \frac2{n!}\,\res\left[\L^{n-1} d \L\right]\notag\\
&&=\frac2{(n-1)!}\,\res\left[\L^{n-1} \left(\log_+ \L-c_{n-1}\right) d \L\right]\\
&&=\res\left[a_{1,n;0}(x)du+a_{1,n;1}(x-\epsilon) e^{v(x)}dv\right].
\end{eqnarray}
It yields the following identities
\begin{equation}\label{dH1-u12}
\frac{\delta H_{1,n}}{\delta u}=a_{1,n;0}(x),\quad \frac{\delta H_{1,n}}
{\delta v}=a_{1,n;1}(x-\epsilon) e^{v(x)}.
\end{equation}
This agree with Lax equation

\[
\frac{\d u}{\d t_{1,n}}&=&\{u,H_{1,n}\}_1={1\over \epsilon} \left[
e^{\epsilon\,\d_x}-1\right]\frac{\delta H_{1,n}}
{\delta v}={1\over \epsilon}(a_{1,n;1}(x+\epsilon)-a_{1,n;1}(x)),\\
 \  \frac{\d v}{\d t_{1,n}}&=&\{v,H_{1,n}\}_1=\frac{1}{\epsilon} \left[1-e^{\epsilon\,\d_x}
\right]\frac{\delta H_{1,n}}
{\delta u}=\frac{1}{\epsilon} \left[a_{1,n;0}(x-\epsilon) e^{v(x)}-a_{1,n;0}(x) e^{v(x+\epsilon)}\right].
\]

 From the above identities we see that
the flows $\frac{\d}{\d t_{1,n}}$ are Hamiltonian systems
of the form (\ref{td-ham}).
For the case of $\beta=1,2$ the recursion relation (\ref{recursion})
follows from the following trivial identities
\begin{eqnarray}
&&n\, \frac{2}{n!} \L^{n} \left(\log_{\pm} \L-c_{n}\right)=\L\,
\frac{2}{(n-1)!}
\L^{n-1} \left(\log_{\pm} \L-c_{n-1}\right)-2\,\frac1{n!} \L^n\notag\\
&&=\frac{2}{(n-1)!} \L^{n-1} \left(\log_{\pm} \L-c_{n-1}\right)\,
\L-2\,\frac1{n!} \L^n.\notag
\end{eqnarray}
Then we get, for $\beta=1,2,$
\begin{eqnarray}
&&n a_{\beta,n+1;1}(x)=a_{\beta,n;0}(x+\epsilon)+ua_{\beta,n;1}(x)+e^va_{\beta,n;2}(x-\epsilon)-2a_{0,n+1;1}(x)\notag\\
&&=a_{\beta,n;0}(x)+u(x+\epsilon)a_{\beta,n;1}(x)+e^{v(x+2\epsilon)}a_{\beta,n;2}(x)-2a_{0,n+1;1}(x).\notag
\end{eqnarray}
This further leads to

\begin{eqnarray}
&&\{u,H_{\beta,n-1}\}_2=\left[\Lambda e^{v(x)}-e^{v(x)} \Lambda^{-1}\right] a_{\beta,n;0}(x)+
u(x) \left[\Lambda-1\right] a_{\beta,n;1}(x-\epsilon) e^{v(x)}\notag\\ \notag
&&
=n\left[a_{\beta,n+1;1}(x) e^{v(x+\epsilon)}-a_{\beta,n+1;1}(x-\epsilon) e^{v(x)}\right]+2\left[a_{0,n+1;0}(x) e^{v(x+\epsilon)}-a_{0,n+1;0}(x-\epsilon) e^{v(x)}\right].\label{pre-recur}
\end{eqnarray}
This is exactly the recursion relation eq.\eqref{recursion} for $u$. The similar recursion flow on $v$ can be similarly derived.
Theorem is proved till now.

\end{proof}

{\bf Remark 2:}
The difference between the bi-Hamiltonian structure of the METH and the one of the ETH \cite{CDZ} is that we separate the Hamiltonian $h_{1,q}$ in eq.(3.8) of \cite{CDZ} into two different logarithmic Hamiltonians $h_{1,n},h_{2,n}$ to construct two independent extended flows of the METH.

Similarly as \cite{CDZ}, the tau symmetry of the METH can be proved in the  following theorem.
\begin{theorem}\label{tausymmetry}
The modified extended Toda hierarchy has the following tau-symmetry property:
\begin{equation}
\frac{\d h_{\alpha,m}}{\d t_{\beta,n}}=\frac{\d
h_{\beta,n}}{\d t_{\alpha,m}},\quad \alpha,\beta=0,1,2,\ m,n\ge 0.
\end{equation}
\end{theorem}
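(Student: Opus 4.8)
The tau-symmetry property asserts a symmetry $\partial_{t_{\beta,n}} h_{\alpha,m} = \partial_{t_{\alpha,m}} h_{\beta,n}$ among the Hamiltonian densities. The natural plan is to reduce both sides to a single common residue expression that is manifestly symmetric under interchange of the index pairs $(\alpha,m)$ and $(\beta,n)$. First I would recall the definitions $h_{0,n} = \frac{1}{(n+1)!}\res\,\L^{n+1}$ and $h_{\alpha,n} = \frac{2}{n!}\res[\L^{n}(\log_{\pm}\L - c_n)]$ for $\alpha = 1,2$, together with the Lax flow $\partial_{t_{\beta,n}}\L = [A_{\beta,n},\L]$ from \eqref{edef}. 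The key computational tool is the trace/residue identity for difference operators: for any difference operators $P,Q$ one has $\res[P,Q] \sim 0$, i.e.\ the residue of a commutator is a total $x$-derivative, so that $\int \res[P,Q]\,dx = 0$. Combined with the cyclic-type property $\res(PQ) \sim \res(QP)$ up to $x$-derivatives, this lets me push the flow derivative through the residue.

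The main steps are as follows. Differentiating $h_{\alpha,m}$ along the flow $t_{\beta,n}$ and using the Lax equation, I would compute $\partial_{t_{\beta,n}} h_{\alpha,m}$ as a residue involving $[A_{\beta,n},\L]$ and the logarithmic derivative $\partial_{t_{\beta,n}}\log_{\pm}\L$. The crucial simplifications are the two equivalence relations proved in the previous theorem, namely \eqref{dlgl} in its two forms $\res(\L^n\,d\log_{\pm}\L)\sim\res(\L^{n-1}\,d\L)$, which control how the derivative acts on the logarithmic factor. After applying these, each side should collapse to an expression of the schematic form $\frac{c}{m!\,n!}\res\bigl[\L^{m}(\log_{\pm}\L - c_m)\,[B_{\beta,n}, \cdot]\bigr]$, and invoking $\res(PQ)\sim\res(QP)$ on the commutator converts the derivative structure into $\res\bigl(B_{\alpha,m}\,\partial_{t_{\beta,n}}(\cdots)\bigr)$. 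The symmetric target is a double-residue object, symmetric in $(\alpha,m)\leftrightarrow(\beta,n)$, of the form $\res\bigl(B_{\alpha,m}\,[B_{\beta,n},\log_\pm\L\text{-type term}]\bigr)$ modulo $x$-derivatives; since $\sim$ identifies both sides up to such derivatives and the Hamiltonian densities are defined only through $\int\,dx$, the symmetry of this common expression yields the claim.

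The genuinely delicate point, and the step I expect to be the main obstacle, is the careful handling of the two \emph{separate} logarithms $\log_+\L$ and $\log_-\L$ and the constants $c_n$. Because $A_{1,n}$ uses $\log_+$ with a projection onto the positive part while $A_{2,n}$ uses $\log_-$ with a projection onto the negative part, the bookkeeping of which projection survives inside each residue is where sign errors and the need for the relation $[\L^k, \log_\pm\L] = 0$ (so that powers of $\L$ commute past the logarithm) become essential. In particular, when $\alpha$ and $\beta$ are both in $\{1,2\}$ but unequal, one must verify that the mixed residue $\res[\L^m(\log_+\L - c_m),\ \L^n(\log_-\L - c_n)]$ reduces correctly; this relies on the fact that $\log_+\L - \log_-\L$ differs from $2\epsilon\partial$ only by lower-order difference terms, together with $[\epsilon\partial, \L] = $ the spatial flow identified in Remark 1. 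Once these commutation facts are established, the remaining manipulations are the routine residue calculus that parallels the corresponding argument in \cite{CDZ}, and the theorem follows by symmetry of the resulting common expression.
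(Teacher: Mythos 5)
Your strategy proves a weaker statement than the theorem. Throughout you work modulo total $x$-derivatives (the relation $\sim$), and you justify this at the end by saying that ``the Hamiltonian densities are defined only through $\int dx$.'' That is precisely what tau-symmetry is \emph{not}: the theorem asserts the pointwise identity $\d h_{\alpha,m}/\d t_{\beta,n}=\d h_{\beta,n}/\d t_{\alpha,m}$ of the densities themselves, not of their integrals $H_{\alpha,m}$. The pointwise version is what makes the one-form $\sum_{\alpha,m}h_{\alpha,m}\,dt_{\alpha,m}$ closed, and hence is exactly what licenses the definition of the tau function $h_{\alpha,n}=\epsilon(\Lambda-1)\,\d\log\tau/\d t_{\alpha,n+1}$ given immediately after the theorem; an identity valid only up to $\d_x(\cdots)$ would not suffice for that. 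Consequently the tools you rely on --- $\res[P,Q]\sim 0$, $\res(PQ)\sim\res(QP)$, and the relations \eqref{dlgl}, each of which holds only up to $x$-derivatives --- cannot yield the claim. (The relations \eqref{dlgl} are in any case the ingredient of the preceding bi-Hamiltonian theorem, not of this one.) What your argument does establish is the symmetry of the integrated quantities, $\d H_{\alpha,m}/\d t_{\beta,n}=\d H_{\beta,n}/\d t_{\alpha,m}$, which is strictly weaker.

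The paper's proof is exact and short, and the repair is to replace your ``up to derivatives'' identities by two \emph{exact} residue identities. Using the Sato equations, the flow of an $S$-dressed operator is a commutator with $-(B_{\beta,n})_-$ (and of a $\bar S$-dressed one with $(B_{\beta,n})_+$), so for instance
\begin{equation*}
\frac{\d h_{1,m}}{\d t_{0,n}}=\frac{2}{m!\,(n+1)!}\,\res\bigl[-(\L^{n+1})_-,\ \L^m(\log_+\L-c_m)\bigr].
\end{equation*}
Now one uses that $\res[P_-,Q_-]=0$ and $\res[P_+,Q_+]=0$ hold \emph{identically}, not just up to derivatives: a product of two strictly negative difference operators has only powers $\Lambda^{k}$ with $k\le -2$, while for two nonnegative ones the $\Lambda^0$ coefficients of $P_+Q_+$ and $Q_+P_+$ are both the commutative product $p_0q_0$. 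These let one replace $\L^m(\log_+\L-c_m)$ by its $(\cdot)_+$ part and then $(\L^{n+1})_-$ by $\L^{n+1}$, giving $\frac{2}{m!\,(n+1)!}\res[(\L^m(\log_+\L-c_m))_+,\L^{n+1}]=\d h_{0,n}/\d t_{1,m}$ on the nose. Your worry about the mixed case $\alpha,\beta\in\{1,2\}$ also dissolves in this scheme: $\L^m(\log_+\L-c_m)$ is dressed by $S$ and $\L^{n}(\log_-\L-c_{n})$ by $\bar S$, so each evolves by a commutator with the appropriate projection of $B_{\beta,n}$, and the same two exact identities apply; no comparison of $\log_+\L-\log_-\L$ with $2\epsilon\d$ is needed anywhere.
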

\begin{proof} Let us prove the theorem for the case when $\alpha=1, \beta=0$,
other cases are proved in a similar way
\[
&&\frac{\d h_{1,m}}{\d t_{0,n}} =\frac2{m!\,(n+1)!}\,\res[-(L^{n+1})_-, L^m (\log_+L-c_m)]\notag\\
&&=\frac2{m!\,(n+1)!}\,\res[(L^m (\log_+L-c_m))_+,(L^{n+1})_-]\notag\\
&& =\frac2{m!\,(n+1)!}\,\res[(L^m (\log_+L
-c_m))_+,L^{n+1}]=\frac{\d h_{0,n}}{\d t_{1,m}}.
\]
Theorem is proved.
\end{proof}

 This property justifies the following definition of
tau function for the modified extended Toda hierarchy:

\begin{definition} The  the $tau$ function of the extened Toda hierarchy can be defined by
the following expressions in terms of the densities of the Hamiltonians:
\begin{equation}
h_{\alpha,n}=\epsilon (\Lambda-1)\frac{\d\log\tau}{\d t_{\alpha,n+1}},\quad \al=0,1,2,\ n\ge 0.
\end{equation}
\end{definition}
The existence of tau function of the METH can be easily proved by Theorem \ref{tausymmetry}.
Note that
the above definition implies in particular the following relations of the
dependent variables $v, u$ of the modified extended Toda hierarchy with the tau function:
\begin{equation}
v=\epsilon \frac{\d}{\d t_{0,0}}\log \frac{\tau(x+\epsilon)}{\tau(x)},\quad
u=\log\frac {\tau(x+\epsilon)\tau(x-\epsilon)}{\tau^2(x)}.
\end{equation}

The above section tells us that the METH is an integrable system. The next section will be devoted to the additional symmetry of the METH.

\section{Additional symmetry and Block algebra}

In this section, we will put constrained condition
eq.\eqref{two dressing} into construction of the flows of additional
symmetries which form the well-known Block algebra.

With the dressing operators given in eq.\eqref{two dressing}, we introduce Orlov-Schulman operators as following
\begin{eqnarray}\label{Moperator}
&&M=S\Gamma S^{-1}, \ \ \bar M=\bar S\bar \Gamma \bar S^{-1},\ \\
 &&\Gamma=
\frac{x}{\epsilon}\Lambda^{-1}+\sum_{n\geq 0}
(n+1)\Lambda^{n}t_{0,n}+\sum_{n\geq 0}
\frac{2}{ (n-1)!}  \Lambda^{n-1} (\epsilon \d -  c_{n-1} )t_{1,n},\\
&&\bar \Gamma=
\frac{-x}{\epsilon}\Lambda+\sum_{n\geq 0}
\frac{2}{ n!}  \Lambda^{-n} (-\epsilon \d -  c_{n} )t_{2,n}.
\end{eqnarray}

Then one can prove the Lax operator $\L$ and Orlov-Schulman operators $M,\bar M$ satisfy the following theorem.
\begin{proposition}\label{flowsofM}
The Lax operator $\L$ and Orlov-Schulman operators $M,\bar M$ of the METH
satisfy the following
\begin{eqnarray}
&[\L,M]=1,[\L,\bar M]=1,[\log_+\L,M]=\L,[\log_-\L,\bar M]=\L\\ \label{Mequation}
&\partial_{ t_{\alpha,n}}M=
[A_{\alpha,n},M],\ \ \partial_{ t_{\alpha,n}}\bar M=[A_{\alpha,n},\bar M],\\
&\dfrac{\partial
M^m\L^k}{\partial{t_{\alpha,n}}}=[A_{\alpha,n},
M^m\L^k],\;  \dfrac{\partial
\bar M^m\L^k}{\partial{t_{\alpha,n}}}=[A_{\alpha,n}, \bar M^m\L^k]
\end{eqnarray}

\end{proposition}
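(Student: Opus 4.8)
The plan is to push everything down to the undressed (``bare'') level, where $\L$, $\log_+\L$, $\log_-\L$, $M$, $\bar M$ become $\Lambda^{\pm1}$, $\epsilon\d$, $-\epsilon\d$, $\Gamma$, $\bar\Gamma$ respectively, verify the identities there, and then transport them back by conjugating with the dressing operators $S,\bar S$. The only noncommuting ingredients needed are $[\Lambda,x]=\epsilon\Lambda$, $[\d,x]=1$, and the fact that $\d$ commutes with $\Lambda=e^{\epsilon\d}$. For the first line, a direct computation in which only the $\tfrac{x}{\epsilon}\Lambda^{-1}$ term of $\Gamma$ (resp.\ the $-\tfrac{x}{\epsilon}\Lambda$ term of $\bar\Gamma$) contributes gives $[\Lambda,\Gamma]=1$ and $[\Lambda^{-1},\bar\Gamma]=1$; conjugating and using $\L=S\Lambda S^{-1}=\bar S\Lambda^{-1}\bar S^{-1}$ yields $[\L,M]=[\L,\bar M]=1$. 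For the two logarithmic relations I would use $\log_+\L=S(\epsilon\d)S^{-1}$ and $\log_-\L=-\bar S(\epsilon\d)\bar S^{-1}$, so that $[\log_+\L,M]=S[\epsilon\d,\Gamma]S^{-1}$ and $[\log_-\L,\bar M]=-\bar S[\epsilon\d,\bar\Gamma]\bar S^{-1}$; each bare bracket again collapses (via the $x$-term) to a single power of the shift operator, which dresses back to the stated power of $\L$.

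For the flow equations $\partial_{t_{\alpha,n}}M=[A_{\alpha,n},M]$ I would differentiate $M=S\Gamma S^{-1}$ and substitute the Sato equations \eqref{bn1}. Using $\partial_{t_{\alpha,n}}S^{-1}=S^{-1}(B_{\alpha,n})_-$ this produces $\partial_{t_{\alpha,n}}M=-[(B_{\alpha,n})_-,M]+S(\partial_{t_{\alpha,n}}\Gamma)S^{-1}$. For $\alpha=0,1$, writing $A_{\alpha,n}=(B_{\alpha,n})_+=B_{\alpha,n}-(B_{\alpha,n})_-$ reduces the claim to the purely bare identity $[B_{\alpha,n}^{\mathrm{bare}},\Gamma]=\partial_{t_{\alpha,n}}\Gamma$, where $B_{0,n}^{\mathrm{bare}}=\tfrac{1}{(n+1)!}\Lambda^{n+1}$ and $B_{1,n}^{\mathrm{bare}}=\tfrac{2}{n!}\Lambda^{n}(\epsilon\d-c_n)$. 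For $\alpha=2$, since $A_{2,n}=-(B_{2,n})_-$ and $\Gamma$ carries no $t_{2,n}$-dependence, the residual term $S(\partial_{t_{2,n}}\Gamma)S^{-1}$ vanishes and the equation holds automatically. The barred case is symmetric: $\bar\Gamma$ depends only on the $t_{2,n}$, so the $\alpha=0,1$ flows of $\bar M$ are automatic and the single substantive identity is $[B_{2,n}^{\mathrm{bare}},\bar\Gamma]=-\partial_{t_{2,n}}\bar\Gamma$, with $B_{2,n}^{\mathrm{bare}}=\tfrac{2}{(n+1)!}\Lambda^{-(n+1)}(\epsilon\d+c_{n+1})$.

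I expect the genuinely logarithmic flows, $\alpha=1$ for $M$ and $\alpha=2$ for $\bar M$, to be the crux. There the bare bracket, computed via $[\Lambda^{k},\Gamma]=k\Lambda^{k-1}$ together with $[\epsilon\d,\Gamma]=\Lambda^{-1}$ (resp.\ $[\Lambda^{-k},\bar\Gamma]=k\Lambda^{-(k-1)}$ and $[\epsilon\d,\bar\Gamma]=-\Lambda$), yields the expected term plus a spurious multiple of $\Lambda^{n-1}$ (resp.\ $\Lambda^{-n}$). The key point is that the definition $c_n-c_{n-1}=\tfrac1n$ is engineered precisely so that $\epsilon\d-c_n=(\epsilon\d-c_{n-1})-\tfrac1n$ absorbs this extra term and matches $\partial_{t_{1,n}}\Gamma$ exactly; this single cancellation is the one computation I would carry out in full detail.

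Finally, for the products I would observe that $\partial_{t_{\alpha,n}}-\mathrm{ad}_{A_{\alpha,n}}$, with $\mathrm{ad}_{A_{\alpha,n}}(X)=[A_{\alpha,n},X]$, is a derivation of the algebra of difference operators (a difference of two derivations). By the Lax equation \eqref{edef} it annihilates $\L$, and by the previous step it annihilates $M$ and $\bar M$; hence it annihilates every monomial $M^m\L^k$ and $\bar M^m\L^k$, which gives $\partial_{t_{\alpha,n}}(M^m\L^k)=[A_{\alpha,n},M^m\L^k]$ and its barred analogue simultaneously, completing the proof.
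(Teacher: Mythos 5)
Your proposal is correct and takes essentially the same route as the paper: the paper's proof consists precisely of the bare commutator identities $[\partial_{t_{\alpha,n}}-B^{\mathrm{bare}}_{\alpha,n},\Gamma]=0$ (resp.\ $\bar\Gamma$), whose verification hinges on exactly the $c_n-c_{n-1}=\tfrac1n$ cancellation you single out, followed by dressing with $S,\bar S$ via the Sato equations — you merely spell out the dressing mechanism, the trivial cross cases, and the derivation argument for $M^m\L^k$ that the paper leaves implicit. One caveat: your (correct) bare brackets $[\epsilon\d,\Gamma]=\Lambda^{-1}$ and $[\epsilon\d,\bar\Gamma]=-\Lambda$ actually dress to $\L^{-1}=S\Lambda^{-1}S^{-1}=\bar S\Lambda\bar S^{-1}$ rather than to $\L$, so they do not "dress back to the stated power of $\L$"; this discrepancy is a typo in the proposition's statement (which should read $[\log_+\L,M]=[\log_-\L,\bar M]=\L^{-1}$), not an error in your computation.
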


\begin{proof}
One can prove the proposition by dressing the following several commutative Lie brackets
\begin{eqnarray*}&&[\partial_{ t_{0,n}}-\frac{\Lambda^{n+1}}{(n+1)!},\Gamma]\\
&=&[\partial_{ t_{0,n}}-\frac{\Lambda^{n+1}}{(n+1)!},\frac{x}{\epsilon}\Lambda^{-1}+\sum_{n\geq 0}
\frac{\Lambda^{n}}{n!}t_{0,n}+\sum_{n\geq 0}
\frac{2}{ (n-1)!}  \Lambda^{n-1} (\epsilon \d -  c_{n-1} )t_{1,n}]\\&=&0,
\end{eqnarray*}

\begin{eqnarray*}&&[\partial_{ t_{0,n}}-\frac{2}{ n!}  \Lambda^n (\epsilon \d -  c_n ),\Gamma]\\
&=&[\partial_{ t_{0,n}}-\frac{2}{ n!}  \Lambda^n (\epsilon \d -  c_n ),\frac{x}{\epsilon}\Lambda^{-1}+\sum_{n\geq 0}
(n+1)\Lambda^{n}t_{0,n}+\sum_{n\geq 0}
\frac{2}{ (n-1)!}  \Lambda^{n-1} (\epsilon \d -  c_{n-1} )t_{1,n}]\\&=&0,
\end{eqnarray*}

\begin{eqnarray*}&&[\partial_{ t_{0,n}}+\frac{2}{ n!}  \Lambda^{-n} (-\epsilon \d -  c_n ),\bar \Gamma]\\
&=&[\partial_{ t_{0,n}}+\frac{2}{ n!}  \Lambda^{-n} (-\epsilon \d -  c_n ),\frac{-x}{\epsilon}\Lambda+\sum_{n\geq 0}
\frac{2}{ n!}  \Lambda^{-n} (-\epsilon \d -  c_{n} )t_{2,n}]\\&=&0.
\end{eqnarray*}

\end{proof}
{\bf Remark 3:}
The reason why we modified the ETH is in the obstacle to construct suitable Orlov-Schulman operators $M,\bar M$ to satisfy eq.\eqref{Mequation}.
For the ETH, we failed to do this because of the combination of $\log_+\L$ and $\log_-\L$  in the extended flows.

We are now to define the additional flows, and then to
prove that they are symmetries, which are called additional
symmetries of the METH. We introduce additional
independent variables $t^*_{m,l}$ and define the actions of the
additional flows on the wave operators as
\begin{eqnarray}\label{definitionadditionalflowsonphi2}
\dfrac{\partial S}{\partial
{t^*_{m,l}}}=-\left((M-\bar M)^m\L^l\right)_{-}S, \ \ \ \dfrac{\partial
\bar S}{\partial {t^*_{m,l}}}=\left((M-\bar M)^m\L^l\right)_{+}\bar S,
\end{eqnarray}
where $m\geq 0, l\geq 0$. The following theorem shows that the definition \eqref{definitionadditionalflowsonphi2} is compatible with reduction condition \eqref{two dressing} of the  METH.
\begin{proposition}\label{preserve constraint}
The additional flows \eqref{definitionadditionalflowsonphi2} preserve reduction condition \eqref{two dressing}.
\end{proposition}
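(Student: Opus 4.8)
The plan is to verify that the two expressions defining $\L$ in the reduction condition \eqref{two dressing}, namely $S\Lambda S^{-1}$ and $\bar S\Lambda^{-1}\bar S^{-1}$, acquire the same derivative under the additional flow $\partial_{t^*_{m,l}}$. This is exactly what ``preserving the reduction condition'' means, for it guarantees that a single operator $\L$ continues to satisfy both dressing relations along the new time. Throughout I abbreviate $A:=(M-\bar M)^m\L^l$, so that the definition \eqref{definitionadditionalflowsonphi2} reads $\partial_{t^*_{m,l}}S=-(A)_-S$ and $\partial_{t^*_{m,l}}\bar S=(A)_+\bar S$.

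First I would differentiate the $S$-side. Using $\partial_{t^*_{m,l}}(S^{-1})=-S^{-1}\big(\partial_{t^*_{m,l}}S\big)S^{-1}$ together with $S\Lambda S^{-1}=\L$, a short computation gives
\begin{equation*}
\partial_{t^*_{m,l}}\big(S\Lambda S^{-1}\big)=-(A)_-\,\L+\L\,(A)_-=[\L,(A)_-].
\end{equation*}
In the same way, differentiating the $\bar S$-side and using $\bar S\Lambda^{-1}\bar S^{-1}=\L$ yields
\begin{equation*}
\partial_{t^*_{m,l}}\big(\bar S\Lambda^{-1}\bar S^{-1}\big)=(A)_+\,\L-\L\,(A)_+=[(A)_+,\L].
\end{equation*}
Subtracting the two, the compatibility of the additional flow with \eqref{two dressing} is equivalent to the vanishing of
\begin{equation*}
[\L,(A)_-]-[(A)_+,\L]=-[(A)_-,\L]-[(A)_+,\L]=-[A,\L],
\end{equation*}
so it suffices to prove $[A,\L]=0$.

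The crux of the argument, and the place where the whole point of using two separate Orlov-Schulman operators becomes visible, is to show that $A=(M-\bar M)^m\L^l$ commutes with $\L$. For this I would invoke Proposition \ref{flowsofM}, which supplies $[\L,M]=1$ and $[\L,\bar M]=1$; subtracting these gives $[\L,M-\bar M]=1-1=0$, so $M-\bar M$ commutes with $\L$. Hence $(M-\bar M)^m$ commutes with $\L$ for every $m$, and since $\L^l$ trivially commutes with $\L$, we conclude $[A,\L]=0$. Combining this with the previous paragraph, the two derivatives of $\L$ coincide and the additional flows \eqref{definitionadditionalflowsonphi2} preserve \eqref{two dressing}.

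The main obstacle here is conceptual rather than computational: the cancellation $[\L,M-\bar M]=0$ is precisely what would fail for the single combined logarithm of the unmodified ETH (cf. Remark 3), and it is the reason the additional symmetry is well defined in the modified hierarchy. Once this commutation relation is secured, everything else is routine dressing bookkeeping; the only points requiring care are the signs and the order of factors in the noncommutative algebra of difference operators, and the recombination $(A)_-+(A)_+=A$ that folds the two split contributions back into the single commutator $[A,\L]$.
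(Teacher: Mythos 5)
Your proof is correct and follows essentially the same route as the paper's: differentiate $\L$ through both dressing relations to obtain $-[(A)_-,\L]$ and $[(A)_+,\L]$, then use $[M-\bar M,\L]=0$ (which the paper invokes as a displayed identity and you derive explicitly from $[\L,M]=[\L,\bar M]=1$ in Proposition \ref{flowsofM}) to conclude the two derivatives agree. The only difference is presentational: you make the cancellation $[\L,M-\bar M]=1-1=0$ explicit, which the paper leaves implicit.
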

\begin{proof} By performing the derivative on $\L$ dressed by $S$ and
using the additional flow about $S$ in \eqref{definitionadditionalflowsonphi2}, we get
\begin{eqnarray*}
(\partial_{t^*_{m,l}}\L)&=& (\partial_{t^*_{m,l}}S)\ \La S^{-1}
+ S\ \La\ (\partial_{t_{m,l}}S^{-1})\\
&=&-((M-\bar M)^m\L^l)_{-} S\ \La\ S^{-1}- S\ \La
S^{-1}\ (\partial_{t^*_{m,l}}S)
\ S^{-1}\\
&=&-((M-\bar M)^m\L^l)_{-} \L+ \L ((M-\bar M)^m\L^l)_{-}\\
&=&-[((M-\bar M)^m\L^l)_{-},\L].
\end{eqnarray*}
Similarly, we perform the derivative on $\L$ dressed by $\bar S$ and
use the additional flow about $\bar S$ in \eqref{definitionadditionalflowsonphi2} to get the following
\begin{eqnarray*}
(\partial_{t^*_{m,l}}\L)&=& (\partial_{t^*_{m,l}}\bar S)\ \La \bar S^{-1}
+ \bar S\ \La\ (\partial_{t_{m,l}}\bar S^{-1})\\
&=&((M-\bar M)^m\L^l)_{+} \bar S\ \La^{-1}\ \bar S^{-1}- \bar S\ \La
\bar S^{-1}\ (\partial_{t^*_{m,l}}\bar S)
\ \bar S^{-1}\\
&=&((M-\bar M)^m\L^l)_{+} \L- \L ((M-\bar M)^m\L^l)_{+}\\
&=&[((M-\bar M)^m\L^l)_{+},\L].
\end{eqnarray*}
Because
\begin{eqnarray}\label{METHadditionalflow111.}
[M-\bar M,\L]=0,
\end{eqnarray}
therefore
\begin{eqnarray}\label{METHadditionalflow1111}
\dfrac{\partial \L}{\partial
{t^*_{m,l}}}=[-\left((M-\bar M)^m\L^l\right)_{-},
\L]=[\left((M-\bar M)^m\L^l\right)_{+}, \L],
\end{eqnarray}
which gives the compatibility of additional flow of METH with reduction condition \eqref{two dressing}.
\end{proof}

Similarly, we can take derivatives on the dressing structure of  $M$ and  $\bar M$ to get the following proposition.
\begin{proposition}\label{add flow}
The additional derivatives  act on  $M$, $\bar M$ as
\begin{eqnarray}
\label{METHadditionalflow11'}
\dfrac{\partial
M}{\partial{t^*_{m,l}}}&=&[-\left((M-\bar M)^m\L^l\right)_{-}, M],
\\
\label{METHadditionalflow12}
\dfrac{\partial
\bar M}{\partial{t^*_{m,l}}}&=&[\left((M-\bar M)^m\L^l\right)_{+}, \bar M].
\end{eqnarray}
\end{proposition}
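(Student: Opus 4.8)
The plan is to reuse the dressing computation of Proposition~\ref{preserve constraint} almost verbatim, the only essential input being that the undressed operators $\Gamma$ and $\bar\Gamma$ of \eqref{Moperator} depend on $x$ and the ordinary times $t_{\alpha,n}$ alone, so that $\partial_{t^*_{m,l}}\Gamma=\partial_{t^*_{m,l}}\bar\Gamma=0$. Abbreviating $B_\pm:=\left((M-\bar M)^m\L^l\right)_\pm$, I would first differentiate the dressing relation $M=S\Gamma S^{-1}$ along the additional flow, inserting the induced rule $\partial_{t^*_{m,l}}S^{-1}=-S^{-1}(\partial_{t^*_{m,l}}S)S^{-1}$ together with the defining action $\partial_{t^*_{m,l}}S=-B_-S$ from \eqref{definitionadditionalflowsonphi2}.

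This produces
\begin{eqnarray*}
\partial_{t^*_{m,l}}M
&=&(\partial_{t^*_{m,l}}S)\,\Gamma\,S^{-1}+S\,\Gamma\,(\partial_{t^*_{m,l}}S^{-1})\\
&=&-B_-\,S\Gamma S^{-1}+S\Gamma S^{-1}\,B_-\\
&=&-B_-M+MB_-=[-B_-,M],
\end{eqnarray*}
where the middle equality uses $\partial_{t^*_{m,l}}\Gamma=0$ and the last step recognises $S\Gamma S^{-1}=M$; this is precisely \eqref{METHadditionalflow11'}. For $\bar M=\bar S\bar\Gamma\bar S^{-1}$ I would argue identically, now with $\partial_{t^*_{m,l}}\bar S=B_+\bar S$ and the resulting $\partial_{t^*_{m,l}}\bar S^{-1}=-\bar S^{-1}B_+$, to obtain $\partial_{t^*_{m,l}}\bar M=B_+\bar M-\bar M B_+=[B_+,\bar M]$, which is \eqref{METHadditionalflow12}.

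I do not expect a genuine obstacle here beyond the bookkeeping already settled in Proposition~\ref{preserve constraint}. The single point that must be checked with care is the $t^*_{m,l}$-independence of $\Gamma$ and $\bar\Gamma$: it is exactly this fact that allows the two $\Gamma$-terms (respectively $\bar\Gamma$-terms) to recombine into a clean commutator, with no leftover inhomogeneous contribution. Once that is noted, both identities follow from the same one-line manipulation.
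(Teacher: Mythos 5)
Your proposal is correct and follows essentially the same route as the paper's own proof: differentiate the dressing relations $M=S\Gamma S^{-1}$ and $\bar M=\bar S\bar\Gamma\bar S^{-1}$, insert the additional-flow equations \eqref{definitionadditionalflowsonphi2}, and use the key fact that $\Gamma$ and $\bar\Gamma$ are independent of the variables $t^*_{m,l}$ so the terms recombine into the stated commutators. The paper likewise singles out exactly this $t^*_{m,l}$-independence of $\Gamma$ as the point to be noted, so there is nothing to add.
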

\begin{proof} By performing the derivative on  $M$ given in (\ref{Moperator}), there exists
a similar derivative as $\partial_{t^*_{m,l}}\L$, i.e.,
\begin{eqnarray*}
(\partial_{t^*_{m,l}}M)&\!\!\!=\!\!\!&(\partial_{t^*_{m,l}}S)\ \Gamma  S^{-1}
+ S\ \Gamma \ (\partial_{t^*_{m,l}}S^{-1})\\
&\!\!\!=\!\!\!&-((M-\bar M)^m\L^l)_{-} S\ \Gamma \ S^{-1}- S\ \Gamma
S^{-1}\ (\partial_{t^*_{m,l}}S)
\ S^{-1}\\
&\!\!\!=\!\!\!&-((M-\bar M)^m\L^l)_{-} M+ M
((M-\bar M)^m\L^l)_{-}\\
&=&-[((M-\bar M)^m\L^l)_{-}, M].
\end{eqnarray*}
Here the fact that $\Gamma $ does not depend on the additional
variables $t^*_{m,l}$ has been used. Other identities can also
be obtained in a similar way.
\end{proof}

By two propositions above,  the following corollary can be easily got.
\begin{corollary}\label{additionflowsonLnMmAnk}
For $ n,k,m,l\geq 0$,
the following identities hold
\begin{eqnarray}\label{METHadditionalflow4}
 \dfrac{\partial M^n\L^k}{\partial{t^*_{m,l}}}=-[((M-\bar M)^m\L^l)_{-}, M^n\L^k]
,\ \ \
 \dfrac{\partial \bar M^n\L^k}{\partial{t^*_{m,l}}}=[((M-\bar M)^m\L^l)_{+},
 \bar M^n\L^k].
\end{eqnarray}
\end{corollary}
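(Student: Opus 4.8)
The plan is to reduce the corollary to the two single-operator flow formulas already established, using only that $\partial_{t^*_{m,l}}$ is a derivation on the algebra of difference operators and that the adjoint action $[\,C,\,\cdot\,]$ of any fixed operator $C$ is likewise a derivation. Writing $P_-:=((M-\bar M)^m\L^l)_-$ and $P_+:=((M-\bar M)^m\L^l)_+$ for brevity, Propositions~\ref{preserve constraint} and \ref{add flow} furnish the base relations
\begin{equation}
\partial_{t^*_{m,l}}\L=-[P_-,\L]=[P_+,\L],\qquad \partial_{t^*_{m,l}}M=-[P_-,M],\qquad \partial_{t^*_{m,l}}\bar M=[P_+,\bar M].
\end{equation}
The key observation is that $M$ and $\L$ obey \emph{the same} evolution law $\partial_{t^*_{m,l}}X=-[P_-,X]$ with one and the same operator $P_-$, while $\bar M$ and $\L$ both obey $\partial_{t^*_{m,l}}X=[P_+,X]$.

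Next I would record the Leibniz step. If two operators $X,Y$ each satisfy $\partial_{t^*_{m,l}}X=-[P_-,X]$ and $\partial_{t^*_{m,l}}Y=-[P_-,Y]$, then
\begin{eqnarray*}
\partial_{t^*_{m,l}}(XY)&=&(\partial_{t^*_{m,l}}X)\,Y+X\,(\partial_{t^*_{m,l}}Y)\\
&=&-[P_-,X]\,Y-X\,[P_-,Y]=-[P_-,XY],
\end{eqnarray*}
the last equality being the derivation property $[P_-,XY]=[P_-,X]Y+X[P_-,Y]$ of the commutator. Hence the set of operators satisfying $\partial_{t^*_{m,l}}X=-[P_-,X]$ is closed under multiplication.

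Taking $X,Y\in\{M,\L\}$ as base cases and inducting on the number of factors, every monomial in $M$ and $\L$, in particular $M^n\L^k$, satisfies
\begin{equation}
\partial_{t^*_{m,l}}(M^n\L^k)=-[P_-,M^n\L^k],
\end{equation}
which is the first identity. The second is proved in exactly the same way, now with $X,Y\in\{\bar M,\L\}$ as base cases and $P_+$ in place of $P_-$, yielding $\partial_{t^*_{m,l}}(\bar M^n\L^k)=[P_+,\bar M^n\L^k]$.

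There is essentially no obstacle: all the content sits in Propositions~\ref{preserve constraint} and \ref{add flow}, and what remains is the formal fact that both the flow $\partial_{t^*_{m,l}}$ and the adjoint action of a fixed operator are derivations, so the relation propagates from the generators to arbitrary products. The only point needing care is that the \emph{same} operator ($P_-$ for the $M$-family, $P_+$ for the $\bar M$-family) appears in the flow equation for both generators; it is precisely this coincidence that lets the two Leibniz terms recombine into a single commutator at each inductive step.
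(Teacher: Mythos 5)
Your proof is correct, but it takes a different route from the paper's. The paper goes back to the dressing picture: it differentiates the conjugation formula $\L^n=S\Lambda^nS^{-1}$ (and, implicitly, $M^n\L^k=S\Gamma^n\Lambda^kS^{-1}$ and the analogous $\bar S$-dressed expression for $\bar M^n\L^k$) using the defining flows \eqref{definitionadditionalflowsonphi2} of $S,\bar S$ together with the fact that $\Lambda$ and $\Gamma$ are independent of $t^*_{m,l}$, exactly as in the proof of Proposition~\ref{add flow}; a single conjugation computation then gives the commutator formula for the whole monomial at once. You instead treat the conclusions of Propositions~\ref{preserve constraint} and~\ref{add flow} as black boxes and propagate them to products via the Leibniz rule, using that both $\partial_{t^*_{m,l}}$ and the adjoint action of a fixed operator are derivations, and — the essential point, which you correctly isolate — that $M$ and $\L$ evolve by the same generator $P_-$ while $\bar M$ and $\L$ both evolve by $P_+$ (the latter thanks to the two equivalent forms of the $\L$-flow in Proposition~\ref{preserve constraint}). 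Both routes are sound. Yours is more self-contained at the level of the Lax-type equations and is in fact more complete than the paper's own write-up, which only displays the computation for $\L^n$ and then asserts the mixed monomial case; the paper's dressing argument, on the other hand, avoids any induction and makes transparent why the same $P_\pm$ governs every power, since everything is conjugation of $t^*$-independent operators.
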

\begin{proof}  First we present the proof of the first
equation.   Considering the dressing relation
\begin{eqnarray*}
 \L^{ n}=S\Lambda^nS^{-1},
\end{eqnarray*}
and \eqref{definitionadditionalflowsonphi2}, we can get relations
\begin{eqnarray*}
\dfrac{\partial \L^n}{\partial{t^*_{m,l}}}=[-((M-\bar M)^m\L^l)_{-}, \L^n],
\end{eqnarray*}
which further leads to the second identity in
\eqref{METHadditionalflow4}.
\end{proof}

With Proposition \ref{add flow} and Corollary \ref{additionflowsonLnMmAnk}, the following theorem can be proved.

\begin{theorem}\label{symmetry}
The additional flows $\partial_{t^*_{m,l}}$ commute
with the bigraded  Toda hierarchy flows $\partial_{t_{\gamma,n}}$, i.e.,
\begin{eqnarray}
[\partial_{t^*_{m,l}}, \partial_{t_{\gamma,n}}]\Phi=0,
\end{eqnarray}
where $\Phi$ can be $S$, $\bar S$ or $\L$, $\gamma=0,1,2; n\geq 0$ and
 $
\partial_{t^*_{m,l}}=\frac{\partial}{\partial{t^*_{m,l}}},
\partial_{t_{\gamma,n}}=\frac{\partial}{\partial{t_{\gamma,n}}}$.

\end{theorem}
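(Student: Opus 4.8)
The plan is to push everything down to the dressing operators. Since $\L=S\Lambda S^{-1}$ and $\Lambda$ carries no time dependence, once $[\partial_{t^*_{m,l}},\partial_{t_{\gamma,n}}]S=0$ and the analogous statement for $\bar S$ are established, the case $\Phi=\L$ follows at once by the product rule. I would first record both derivations acting on $S$ in the uniform shape $\partial_{t_{\gamma,n}}S=-(B_{\gamma,n})_-\,S$ (Proposition~\ref{t1}) and $\partial_{t^*_{m,l}}S=-C_-\,S$, abbreviating $C:=(M-\bar M)^m\L^l$ (equation~\eqref{definitionadditionalflowsonphi2}). Expanding the two iterated derivatives and cancelling the terms with an unprojected $S$-factor, one is left with
\begin{equation*}
[\partial_{t^*_{m,l}},\partial_{t_{\gamma,n}}]S=\Big(\partial_{t_{\gamma,n}}C_--\partial_{t^*_{m,l}}(B_{\gamma,n})_-+[(B_{\gamma,n})_-,C_-]\Big)S,
\end{equation*}
so the whole theorem collapses to the operator identity $\partial_{t^*_{m,l}}(B_{\gamma,n})_--\partial_{t_{\gamma,n}}C_-=[(B_{\gamma,n})_-,C_-]$ together with its mirror for $\bar S$.

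Next I would evaluate the two derivatives on the left. For $\partial_{t_{\gamma,n}}C_-$ I invoke Proposition~\ref{flowsofM}, Proposition~\ref{add flow} and Corollary~\ref{additionflowsonLnMmAnk}: the ordinary flows act on $M$, $\bar M$ and $\L$ through $[A_{\gamma,n},\cdot]$, and since $[M-\bar M,\L]=0$ the powers reassemble into $\partial_{t_{\gamma,n}}C=[A_{\gamma,n},C]$, whence $\partial_{t_{\gamma,n}}C_-=[A_{\gamma,n},C]_-$. For $\partial_{t^*_{m,l}}(B_{\gamma,n})_-$ I use that the additional flow acts on $\L$ by $\partial_{t^*_{m,l}}\L=[-C_-,\L]$ (equation~\eqref{METHadditionalflow1111}) and on the first logarithm by $\partial_{t^*_{m,l}}\log_+\L=[-C_-,\log_+\L]$, the latter obtained by differentiating the dressing $\log_+\L=S\cdot\epsilon\partial\cdot S^{-1}$ and using that $\epsilon\partial$ does not depend on $t^*_{m,l}$, exactly as $\Gamma$ was seen to be independent of the additional variables in Propositions~\ref{flowsofM}--\ref{add flow}. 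For $\gamma=0,1$ the operator $B_{\gamma,n}$ is built only from $\L$ and $\log_+\L$, so the product rule gives $\partial_{t^*_{m,l}}B_{\gamma,n}=[-C_-,B_{\gamma,n}]$ and hence $\partial_{t^*_{m,l}}(B_{\gamma,n})_-=[-C_-,B_{\gamma,n}]_-$, with moreover $A_{\gamma,n}=(B_{\gamma,n})_+$.

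Inserting these for $\gamma=0,1$ and writing $B:=B_{\gamma,n}=B_++B_-$, $C=C_++C_-$, the left side of the target identity becomes $-[C_-,B]_--[B_+,C]_-$. Expanding over the $\pm$ splitting, the nonnegative commutator $[B_+,C_+]_-$ vanishes, the two mixed pieces $[C_-,B_+]_-$ and $[B_+,C_-]_-$ cancel because $[B_+,C_-]=-[C_-,B_+]$, and only $-[C_-,B_-]$ survives, which is exactly $[(B)_-,C_-]$. This closes the $S$-identity for $\gamma=0,1$; the mirror computation on $\bar S$, using $\partial_{t_{\gamma,n}}\bar S=(B_{\gamma,n})_+\bar S$, $\partial_{t^*_{m,l}}\bar S=C_+\bar S$ and $\partial_{t^*_{m,l}}\log_-\L=[C_+,\log_-\L]$ from the dressing $\log_-\L=-\bar S\cdot\epsilon\partial\cdot\bar S^{-1}$, closes the $\bar S$-identity for $\gamma=2$ by the same degree argument with the $+$ and $-$ projections interchanged.

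The step I expect to be the main obstacle is the presence of two independent logarithms, which breaks the symmetry between $S$ and $\bar S$. Because $M$ flows with $[-C_-,\cdot]$ while $\bar M$ flows with $[C_+,\cdot]$ (Proposition~\ref{add flow}, Corollary~\ref{additionflowsonLnMmAnk}), and because $\log_+\L$ commutes with $\L$ and $\log_+\L$ but not with $\log_-\L$, the easy cancellations pair the $\log_+$ flows ($\gamma=0,1$) with the $S$-computation and the $\log_-$ flow ($\gamma=2$) with the $\bar S$-computation. Finishing the theorem still requires the crossed identities---the bracket on $S$ for $\gamma=2$ and on $\bar S$ for $\gamma=0,1$---where $[B_{\gamma,n},\log_\mp\L]$ no longer vanishes and $\partial_{t^*_{m,l}}B_{\gamma,n}$ cannot be written as a single adjoint action. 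The way through is to keep the full operator $C$ instead of a projection wherever the relation $[M-\bar M,\L]=0$ permits, trading $-C_-$ for $C_+$ against $\L$-powers, and to track the residual terms carefully. Keeping these signs and the two projections consistent across all three values of $\gamma$ is the delicate bookkeeping; the underlying cancellation is the standard Orlov--Schulman mechanism already used for the bigraded Toda hierarchy in \cite{ourBlock}.
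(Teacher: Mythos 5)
Your proposal is, in substance, the paper's own proof: reduce everything to the Sato equations \eqref{bn1} and the additional flows \eqref{definitionadditionalflowsonphi2}, expand the mixed second derivative on $S$ (resp.\ $\bar S$), replace $\partial_{t_{\gamma,n}}C$ and $\partial_{t^*_{m,l}}B_{\gamma,n}$ by adjoint actions, and kill the result with the projection identities $[B_+,C_+]_-=0$, $[B_-,C_-]_+=0$. The two cases you work out ($\gamma=0,1$ on $S$, and $\gamma=2$ on $\bar S$) are exactly the two cases the paper writes out explicitly, and your computations there are correct.

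The flaw is in your final paragraph, and it leaves the proof genuinely incomplete. You defer the crossed cases ($\gamma=2$ on $S$; $\gamma=0,1$ on $\bar S$) on the grounds that there ``$\partial_{t^*_{m,l}}B_{\gamma,n}$ cannot be written as a single adjoint action.'' That claim is false. The operator $B_{2,n}$ is built from $\L$ and $\log_-\L$ alone, and the additional flow acts on \emph{both} of these by $\mathrm{ad}_{C_+}$: indeed $\partial_{t^*_{m,l}}\L=[-C_-,\L]=[C_+,\L]$ by Proposition \ref{preserve constraint} (this is where $[C,\L]=0$ lets you trade $-C_-$ for $C_+$), while $\partial_{t^*_{m,l}}\log_-\L=[C_+,\log_-\L]$ follows by differentiating the dressing $\log_-\L=-\bar S\,\epsilon\d\,\bar S^{-1}$ with \eqref{definitionadditionalflowsonphi2}. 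Hence $\partial_{t^*_{m,l}}B_{2,n}=[C_+,B_{2,n}]$ \emph{is} a single adjoint action, just by $C_+$ instead of $-C_-$; symmetrically $\partial_{t^*_{m,l}}B_{\gamma,n}=[-C_-,B_{\gamma,n}]$ for $\gamma=0,1$. Feeding this into your own scheme, and using Proposition \ref{flowsofM} in the form consistent with \eqref{bn1} (the form the paper itself uses in its $\bar S$ computation), namely $\partial_{t_{2,n}}C=[-(B_{2,n})_-,C]$, the crossed case $\gamma=2$ on $S$ reads, with $B:=B_{2,n}$,
\begin{equation*}
[\partial_{t^*_{m,l}},\partial_{t_{2,n}}]S=\bigl(-[C_+,B]_- -[B_-,C]_- +[B_-,C_-]\bigr)S,
\end{equation*}
and the same two-line degree count you already used closes it: $[C_+,B_+]_-=0$ and $[B_-,C_-]_-=[B_-,C_-]$ reduce the bracket to $-[C_+,B_-]_- -[B_-,C_+]_-=0$. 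The case $\gamma=0,1$ on $\bar S$ is the mirror image with the two projections exchanged. So the crossed identities are not a ``delicate bookkeeping'' problem requiring residual-term tracking; they follow from the two dressing formulas you yourself wrote down, and they are precisely what the paper compresses into the sentence ``the other cases in the theorem can be proved in similar ways.'' As submitted, though, your argument stops short of a complete proof: you must either carry out these two short computations or observe, as above, that they are verbatim repetitions of the ones you did.
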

\begin{proof} According to the definition,
\begin{eqnarray*}
[\partial_{t^*_{m,l}},\partial_{t_{\gamma,n}}]S=\partial_{t^*_{m,l}}
(\partial_{t_{\gamma,n}}S)-
\partial_{t_{\gamma,n}} (\partial_{t^*_{m,l}}S),
\end{eqnarray*}
and using the actions of the additional flows and the bigraded Toda
flows on $S$, for  $\gamma=0,1;$, we have
\begin{eqnarray*}
[\partial_{t^*_{m,l}},\partial_{t_{\gamma,n}}]S
&=& -\partial_{t^*_{m,l}}\left((B_{\gamma,n})_{-}S\right)+
\partial_{t_{\gamma,n}} \left(((M-\bar M)^m\L^l)^m_{-}S \right)\\
&=& -(\partial_{t^*_{m,l}}B_{\gamma,n} )_{-}S-
(B_{\gamma,n})_{-}(\partial_{t^*_{m,l}}S)\\&&+
[\partial_{t_{\gamma,n}} ((M-\bar M)^m\L^l)]_{-}S +
((M-\bar M)^m\L^l)_{-}(\partial_{t_{\gamma,n}}S).
\end{eqnarray*}
Using \eqref{definitionadditionalflowsonphi2} and Proposition \ref{flowsofM}, it
equals
\begin{eqnarray*}
[\partial_{t^*_{m,l}},\partial_{t_{\gamma,n}}]S
&=&[\left((M-\bar M)^m\L^l\right)_{-}, B_{\gamma,n}]_{-}S+
(B_{\gamma,n})_{-}\left((M-\bar M)^m\L^l\right)_{-}S\\
&&+[(B_{\gamma,n})_{+},(M-\bar M)^m\L^l]_{-}S-((M-\bar M)^m\L^l)_{-}(B_{\gamma,n})_{-}S\\
&=&[((M-\bar M)^m\L^l)_{-}, B_{\gamma,n}]_{-}S- [(M-\bar M)^m\L^l,
(B_{\gamma,n})_{+}]_{-}S\\&&+
[(B_{\gamma,n})_{-},((M-\bar M)^m\L^l)_{-}]S\\
&=&0.
\end{eqnarray*}
Similarly, using \eqref{definitionadditionalflowsonphi2} and Proposition \ref{flowsofM}, we can prove the additional flows   commute with
flows $\d_{t_{\beta,n}}$ in the sense
of acting on $S$. Of course, using
\eqref{definitionadditionalflowsonphi2}, \eqref{bn1} and Theorem \ref{flowsofM}, we can also prove the additional flows  commute
with all flows of METH  in the sense of acting on $\bar S, L$. Here
we also give the proof for commutativity of additional symmetries with
$\d_{t_{2,n}}$. To be a little
different from the proof above, we let the Lie bracket act on $\bar S$,
\begin{eqnarray*}
[\partial_{t^*_{m,l}},\partial_{t_{2,n}}]\bar S &=&
\partial_{t^*_{m,l}}\left((B_{2,n})_{+}\bar S \right)-
\partial_{t_{2,n}} \left(((M-\bar M)^m\L^l)_{+}\bar S  \right)\\
&=& (\partial_{t^*_{m,l}}B_{2,n} )_{+}\bar S +
(B_{2,n})_{+}(\partial_{t^*_{m,l}}\bar S
)\\&&-(\partial_{t_{2,n}} ((M-\bar M)^m\L^l))_{+}\bar S
-((M-\bar M)^m\L^l)_{+}(\partial_{t_{2,n}}\bar S ),
\end{eqnarray*}
which further leads to
\begin{eqnarray*}
[\partial_{t^*_{m,l}},\partial_{t_{2,n}}]\bar S
&=&[((M-\bar M)^m\L^l)_{+}, B_{2,n}]_{+}\bar S+
(B_{2,n})_{+}((M-\bar M)^m\L^l)_{+}\bar S \\
&&+[(B_{2,n})_{-},(M-\bar M)^m\L^l]_{+}\bar S -((M-\bar M)^m\L^l)_{+}(B_{2,n})_{+}\bar S \\
&=&[((M-\bar M)^m\L^l)_{+}, B_{2,n}]_{+}\bar S
+[(B_{2,n})_{-},(M-\bar M)^m\L^l]_{+}\bar S\\&& +
[(B_{2,n})_{+},((M-\bar M)^m\L^l)_{+}]\bar S \\
&=&[((M-\bar M)^m\L^l)_{+}, B_{2,n}]_{+}\bar S  +
[B_{2,n},((M-\bar M)^m\L^l)_{+}]_{+}\bar S =0.
\end{eqnarray*}
In the proof above, $[(B_{\gamma,n})_{+},
((M-\bar M)^m\L^l)]_{-}= [(B_{\gamma,n})_{+}, ((M-\bar M)^m\L^l)_{-}]_{-}$ has
been used, since $[(B_{\gamma,n})_{+}, ((M-\bar M)^m\L^l)_{+}]_{-}=0$. The other cases in the theorem can be proved in similar ways.
\end{proof}
The commutative property in Theorem \ref{symmetry} means that
additional flows are symmetries of the METH.
Since they are symmetries, it is natural to consider the algebraic
structures among these additional symmetries. So we obtain the following important
theorem.
\begin{theorem}\label{WinfiniteCalgebra}
The additional flows  $\partial_{t^*_{m,l}}(m>0,l\geq 0)$ form a Block type Lie algebra with the
following relation
 \begin{eqnarray}\label{algebra relation}
[\partial_{t^*_{m,l}},\partial_{t^*_{n,k}}]= (km-n l)\d^*_{m+n-1,k+l-1},
\end{eqnarray}
which holds in the sense of acting on  $S$, $\bar S$ or $\L$ and  $m,n\geq0,\, l,k\geq 0.$
\end{theorem}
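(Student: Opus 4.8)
The plan is to verify \eqref{algebra relation} by letting both sides act on the dressing operator $S$, the cases of $\bar S$ and $\L$ being entirely analogous. Throughout I abbreviate $\mathbb{A}_{m,l}:=(M-\bar M)^m\L^l$, so that the additional flows read $\partial_{t^*_{m,l}}S=-(\mathbb{A}_{m,l})_-S$. First I would expand the Lie bracket and commute the $t^*$-independent projection $(\cdot)_-$ past the derivatives to reduce everything to the action of one additional flow on another flow's generator:
\begin{equation}
[\partial_{t^*_{m,l}},\partial_{t^*_{n,k}}]S=-\left(\partial_{t^*_{m,l}}\mathbb{A}_{n,k}-\partial_{t^*_{n,k}}\mathbb{A}_{m,l}\right)_-S+[(\mathbb{A}_{n,k})_-,(\mathbb{A}_{m,l})_-]\,S .
\end{equation}
The decisive input is the relation $[\L,M-\bar M]=0$ of \eqref{METHadditionalflow111.}, which makes the generators \emph{mutually commute}, $[\mathbb{A}_{m,l},\mathbb{A}_{n,k}]=0$; note this means the Block structure cannot come from the bracket of the generators and must instead be produced by the flow acting on a generator.

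The central computation is thus $\partial_{t^*_{m,l}}\mathbb{A}_{n,k}$. Using Proposition \ref{add flow} and Corollary \ref{additionflowsonLnMmAnk}, I would first obtain the flow of $M-\bar M$. Because $M$ is dressed by $S$ while $\bar M$ is dressed by $\bar S$, this flow is \emph{not} simply a commutator with $(\mathbb{A}_{m,l})_-$; a telescoping argument on the powers of $M-\bar M$ (using $[M-\bar M,\L]=0$) yields
\begin{equation}
\partial_{t^*_{m,l}}\mathbb{A}_{n,k}=-[(\mathbb{A}_{m,l})_-,\mathbb{A}_{n,k}]-E_{m,l;n,k},\quad E_{m,l;n,k}=\sum_{j=0}^{n-1}(M-\bar M)^j[\mathbb{A}_{m,l},\bar M](M-\bar M)^{n-1-j}\L^k .
\end{equation}
The anomalous term splits as $E=E^{(1)}+E^{(2)}$ according to $[\mathbb{A}_{m,l},\bar M]=l\,(M-\bar M)^m\L^{l-1}+[(M-\bar M)^m,\bar M]\,\L^l$, where the first piece uses the elementary relation $[\L,\bar M]=1$ and the second piece contains the a priori uncontrolled operator $K:=[M,\bar M]$.

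Antisymmetrizing in $(m,l)\leftrightarrow(n,k)$ is where the algebra emerges. The clean piece evaluates to $E^{(1)}_{m,l;n,k}=ln\,\mathbb{A}_{m+n-1,k+l-1}$, so its antisymmetrization contributes precisely the coefficient $km-nl$ in front of $\mathbb{A}_{m+n-1,k+l-1}$. The $K$-dependent piece takes the form $E^{(2)}_{m,l;n,k}=\sum_p N_{m,n}(p)\,(M-\bar M)^p K (M-\bar M)^{m+n-2-p}\L^{k+l}$, whose multiplicities $N_{m,n}(p)$ count the lattice points $i+j=p$ inside the rectangle $[0,m-1]\times[0,n-1]$; since this count is symmetric under $m\leftrightarrow n$, the term $E^{(2)}$ is itself symmetric in the two index pairs and therefore drops out of the antisymmetrization, so that $K$ never has to be evaluated. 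It then remains to check that the leftover commutator contribution $[(\mathbb{A}_{m,l})_-,\mathbb{A}_{n,k}]_--[(\mathbb{A}_{n,k})_-,\mathbb{A}_{m,l}]_-+[(\mathbb{A}_{n,k})_-,(\mathbb{A}_{m,l})_-]$ vanishes; this follows by projecting $0=[\mathbb{A}_{m,l},\mathbb{A}_{n,k}]$ onto its negative part and using that a product of two strictly negative (respectively non-negative) operators remains strictly negative (respectively non-negative). What survives is exactly $(km-nl)\,\partial^*_{m+n-1,k+l-1}S$.

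I expect the main obstacle to be precisely the non-uniform dressing of $M-\bar M$: it is responsible for the anomalous term $E$ and for the appearance of the operator $K=[M,\bar M]$, which cannot be computed explicitly from the data available. The crux of the argument is the combinatorial observation that the lattice-point multiplicities are symmetric in the two index pairs, whence the entire $K$-dependent contribution cancels and the Block structure constant $km-nl$ is generated solely by the clean relation $[\L,\bar M]=1$. Finally I would note that rerunning the computation on $\bar S$ (now with the $(\cdot)_+$ projections and the companion relation $[\L,M]=1$), and on $\L$, reproduces the same relation \eqref{algebra relation}, establishing the Block Lie algebra in all three representations.
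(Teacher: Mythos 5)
Your proof is correct and takes essentially the same approach as the paper: expand $[\partial_{t^*_{m,l}},\partial_{t^*_{n,k}}]$ acting on $S$ through the definition \eqref{definitionadditionalflowsonphi2}, then differentiate $(M-\bar M)^n\L^k$ term by term using Proposition \ref{add flow} and Corollary \ref{additionflowsonLnMmAnk}. The difference is only one of completeness: the paper jumps from that Leibniz expansion directly to $[(nl-km)(M-\bar M)^{m+n-1}\L^{k+l-1}]_{-}S$, whereas you supply exactly the justifications that jump requires --- the anomalous term $E$ arising because $M$ flows by $-(\cdot)_{-}$ commutators while $\bar M$ flows by $+(\cdot)_{+}$ commutators, the cancellation of the uncontrolled $[M,\bar M]$ contributions through the $(m,l)\leftrightarrow(n,k)$ symmetry of the lattice-point multiplicities, and the vanishing of the leftover projected commutators, which follows from projecting $[(M-\bar M)^m\L^l,(M-\bar M)^n\L^k]=0$ onto its negative part.
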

\begin{proof}
 By using
 (\ref{definitionadditionalflowsonphi2}), we get
\begin{eqnarray*}
[\partial_{t^*_{m,l}},\partial_{t^*_{n,k}}]S&=&
\partial_{t^*_{m,l}}(\partial_{t^*_{n,k}}S)-
\partial_{t^*_{n,k}}(\partial_{t^*_{m,l}}S)\\
&=&-\partial_{t^*_{m,l}}\left(((M-\bar M)^n\L^k)_{-}S\right)
+\partial_{t^*_{n,k}}\left(((M-\bar M)^m\L^l)_{-}S\right)\\
&=&-(\partial_{t^*_{m,l}}
(M-\bar M)^n\L^k)_{-}S-((M-\bar M)^n\L^k)_{-}(\partial_{t^*_{m,l}} S)\\
&&+ (\partial_{t^*_{n,k}} (M-\bar M)^m\L^l)_{-}S+
((M-\bar M)^m\L^l)_{-}(\partial_{t^*_{n,k}} S).
\end{eqnarray*}
On the account of  Proposition\ref{add flow}, we further get
 \begin{eqnarray*}&&
[\partial_{t^*_{m,l}},\partial_{t^*_{n,k}}]S\\
&=&-\Big[\sum_{p=0}^{n-1}
(M-\bar M)^p(\partial_{t^*_{m,l}}(M-\bar M))(M-\bar M)^{n-p-1}\L^k
+(M-\bar M)^n(\partial_{t^*_{m,l}}\L^k)\Big]_{-}S\\&&-((M-\bar M)^n\L^k)_{-}(\partial_{t^*_{m,l}} S)\\
&&+\Big[\sum_{p=0}^{m-1}
(M-\bar M)^p(\partial_{t^*_{n,k}}(M-\bar M))(M-\bar M)^{m-p-1}\L^l
+(M-\bar M)^m(\partial_{t^*_{n,k}}\L^l)\Big]_{-}S\\&&+
((M-\bar M)^m\L^l)_{-}(\partial_{t^*_{n,k}} S)\\
&=&[(nl-km)(M-\bar M)^{m+n-1}\L^{k+l-1}]_-S\\
&=&(km-nl)\d^*_{m+n-1,k+l-1}S.
\end{eqnarray*}
Similarly  the same results on $\bar S$ and $\L$ are as follows
 \begin{eqnarray*}
[\partial_{t^*_{m,l}},\partial_{t^*_{n,k}}]\bar S
&=&((km-nl)(M-\bar M)^{m+n-1}\L^{k+l-1})_+\bar S\\
&=&(km-nl)\d^*_{m+n-1,k+l-1}\bar S,
\\[6pt]
{}[\partial_{t^*_{m,l}},\partial_{t^*_{n,k}}]\L&=&
\partial_{t^*_{m,l}}(\partial_{t^*_{n,k}}\L)-
\partial_{t^*_{n,k}}(\partial_{t^*_{m,l}}\L)\\
&=&[((nl-km)(M-\bar M)^{m+n-1}\L^{k+l-1})_-, \L]\\
&=&(km-nl)\d^*_{m+n-1,k+l-1}\L.
\end{eqnarray*}
\end{proof}
Denote $d_{m,l}=\d_{t^*_{m+1,l+1}}$, and let Block algebra be the span of all $d_{m,l},\,m,l\ge-1$.
Then by \eqref{algebra relation}, Block algebra is a Lie algebra with relations
\begin{eqnarray}
[d_{m,l},d_{n,k}]=((m+1)(k+1)-(l+1)(n+1))d_{m+n,l+k}\mbox{ \ for \ } m,n,l,k\geq -1.
\end{eqnarray}
Thus Block algebra is in fact a Block type Lie algebra which is generated by the set
\begin{eqnarray}\label{Gen-B}
B=\{ d_{-1,0},
d_{0,-1}, d_{0,0}, d_{1,0}, d_{0,1}\}=\{ \d^*_{0,1}, \d^*_{1,0}, \d^*_{1,1}, \d^*_{2,1},
\d^*_{1,2}\}.
\end{eqnarray}

To see the generalized Block flow equation clearly, we give two simple examples, i.e. the $t^*_{1,0},t^*_{1,1}$ flow equation of the METH as following.
The $t^*_{1,0}$ flow equation of the METH
\begin{eqnarray}\label{METHadditionalflow1111}
\dfrac{\partial \L}{\partial
{t^*_{1,0}}}&=&[(M-\bar M)_{+}, \L]=\sum_{n\geq 1}
nt_{0,n}\d_{t_{0,n-1}}+\sum_{n\geq 1}
t_{1,n}\d_{t_{1,n-1}}+\sum_{n\geq 2}t_{2,n}\d_{t_{2,n-1}}-1.
\end{eqnarray}
The $t^*_{1,0}$ flow equation of the METH is as
\begin{eqnarray}\label{METHadditionalflow1111}
\dfrac{\partial \L}{\partial
{t^*_{1,1}}}&=&[(M\L-\bar M\L)_{+}, \L]\\
&=&[\frac{x}{\epsilon},\L]-\L+\sum_{n\geq 1}
nt_{0,n}\d_{t_{0,n}}+\sum_{n\geq 1}
t_{1,n}\d_{t_{1,n}}+\sum_{n\geq 1}t_{2,n}\d_{t_{2,n}}+2(t_{1,n}-t_{2,n})\d_{t_{0,n}},
\end{eqnarray}
which further leads to

\begin{eqnarray}\label{METHadditionalflow1111}
&&\dfrac{\partial u}{\partial
{t^*_{1,1}}}=u+(\sum_{n\geq 1}
nt_{0,n}\d_{t_{0,n}}+\sum_{n\geq 1}
t_{1,n}\d_{t_{1,n}}+\sum_{n\geq 1}t_{2,n}\d_{t_{2,n}}+2(t_{1,n}-t_{2,n})\d_{t_{0,n}})u,\\
&& \dfrac{\partial v}{\partial
{t^*_{1,1}}}=2v+(\sum_{n\geq 1}
nt_{0,n}\d_{t_{0,n}}+\sum_{n\geq 1}
t_{1,n}\d_{t_{1,n}}+\sum_{n\geq 1}t_{2,n}\d_{t_{2,n}}+2(t_{1,n}-t_{2,n})\d_{t_{0,n}})v.
\end{eqnarray}

\begin{theorem}
The Block flows of the modified extended Toda hierarchy  are Hamiltonian systems
of the form
\begin{equation}
\frac{\d u}{\d t^*_{m,l}}=\{u,H^*_{m,l}\}_1,\ \frac{\d v}{\d t^*_{m,l}}=\{v,H^*_{m,l}\}_1,\ \ m,l\geq 0.
\end{equation}
They satisfy the following bi-Hamiltonian recursion relation
\[ \frac{\d }{\d t^*_{m,l}}=\{\cdot,H^*_{m,l-1}\}_2=n
\{\cdot,H^*_{m,l}\}_1.
\]
Here the Hamiltonians (depending on $t_{\alpha,n}$) with respect to $t^*_{m,l}$ have the form
\begin{equation}
H^*_{m,l}=\int h^*_{m,l}(u,v; u_x,v_x; \dots;t_{\beta,n}; \epsilon) dx,\quad \beta=0,1,2; \ n\ge 0,
\end{equation}
with the Hamiltonian densities  $ h^*_{m,l}(u,v; u_x,v_x; \dots;t_{\alpha,n}; \epsilon)$ given by
\[
 h^*_{m,l}&=&\res (M-\bar M)^m\L^l,
\]

\end{theorem}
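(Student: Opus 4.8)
The plan is to follow, step by step, the proof of the first (bi-Hamiltonian) theorem of Section~3, replacing the flow generators $A_{\beta,n}$ by the additional generators $\big((M-\bar M)^m\L^l\big)_+$. Write
\begin{equation}
(M-\bar M)^m\L^l=\sum_{k}a^*_{m,l;k}\,\Lambda^{k}.
\end{equation}
From Proposition~\ref{preserve constraint} the Block flow is $\partial_{t^*_{m,l}}\L=[\,((M-\bar M)^m\L^l)_+,\L\,]=-[\,((M-\bar M)^m\L^l)_-,\L\,]$, so, exactly as in \eqref{edef3} and the two coefficient equations following it, I would first read off
\begin{equation}
\epsilon\,\partial_{t^*_{m,l}}u=a^*_{m,l;1}(x+\epsilon)-a^*_{m,l;1}(x),\qquad
\epsilon\,\partial_{t^*_{m,l}}v=a^*_{m,l;0}(x-\epsilon)e^{v(x)}-a^*_{m,l;0}(x)e^{v(x+\epsilon)}.
\end{equation}
These are the equations that the Hamiltonian formulation must reproduce through the first Poisson bracket \eqref{toda-pb1}. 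Note that the case $m=0$ gives $h^*_{0,l}=\res\L^l$ and reduces to the Toda Hamiltonians of Section~3, so the genuinely new content is $m\ge1$.

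Second, I would make the $(u,v)$-dependence of $M-\bar M$ explicit by conjugating the series for $\Gamma,\bar\Gamma$ by $S,\bar S$. Using $S\Lambda^{n}S^{-1}=\L^{n}$, $S(\epsilon\partial)S^{-1}=\log_+\L$, $\bar S\Lambda^{-n}\bar S^{-1}=\L^{n}$ and $\bar S(\epsilon\partial)\bar S^{-1}=-\log_-\L$, one finds that $M-\bar M$ is a (time-dependent) linear combination of the operators $\L^{n}$, $\L^{n}\log_+\L$ and $\L^{n}\log_-\L$, up to a pure difference operator coming from commuting $S,\bar S$ past the explicit factor $x/\epsilon$. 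The upshot is that, under the variation $d$ (which acts on $u,v$ only, the variables $x$ and $t_{\beta,n}$ being inert), the sole non-trace-class contributions come from the logarithms $\log_\pm\L$.

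Third --- and this is the heart of the matter --- I would compute $d\,h^*_{m,l}=d\,\res\big((M-\bar M)^m\L^l\big)$. Expanding the variation by the Leibniz rule and using $[M-\bar M,\L]=0$ to collect the $\L$-factors, the pure-difference pieces are moved next to a single $d\L$ by the trace property (the residue of a commutator is a total $x$-derivative, hence $\sim0$), while every term of the form $\res(\L^{a}\,d\log_\pm\L)$ is reduced to $\res(\L^{a-1}\,d\L)$ by the equivalence relations \eqref{dlgl}, the constants $c_n$ absorbing the resulting index shift precisely as in the computation of $d\,h_{1,n}$ in Section~3. The result should be $d\,h^*_{m,l}\sim\res\big(Q_{m,l}\,d\L\big)$ for a difference operator $Q_{m,l}$ whose zeroth and first coefficients are $a^*_{m,l;0}$ and $a^*_{m,l;1}$; since $d\L=du+e^{v}\,dv\,\Lambda^{-1}$, this identifies $\delta H^*_{m,l}/\delta u$ and $\delta H^*_{m,l}/\delta v$ in the manner of \eqref{dH1-u12}. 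Inserting them into \eqref{toda-pb1} then returns exactly the coefficient flow of the first paragraph, establishing the Hamiltonian form.

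Finally, the bi-Hamiltonian recursion follows from the trivial operator identity $(M-\bar M)^m\L^{l}=\big[(M-\bar M)^m\L^{l-1}\big]\L$: multiplying a density by $\L$ corresponds to applying the recursion operator $P_2P_1^{-1}$ relating the two brackets \eqref{toda-pb1}--\eqref{toda-pb2}, so that $\{\cdot,H^*_{m,l-1}\}_2$ and $\{\cdot,H^*_{m,l}\}_1$ are proportional, in the same way the factor $\L$ produced the relations \eqref{recursion} for the $\beta$-flows. I expect the main obstacle to be the third step: because $M-\bar M$ carries the logarithms $\log_\pm\L$ with their $\epsilon\partial$ parts, one cannot cycle factors freely under $\res$, and the entire computation rests on organizing the anomalous contributions so that, after repeated use of \eqref{dlgl}, only the clean pairing $\res(Q_{m,l}\,d\L)$ survives. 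This is bookkeeping rather than a conceptual difficulty, but it is where all the care is required.
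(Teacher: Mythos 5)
Your proposal is correct and takes essentially the same route as the paper: the paper's entire proof of this theorem is the single sentence that ``the proof is similar as the proof for original Toda flows,'' and your plan---transplanting the Section 3 argument (reading off the coefficient equations from $\partial_{t^*_{m,l}}\L=[((M-\bar M)^m\L^l)_+,\L]$, reducing $d\,\res\left((M-\bar M)^m\L^l\right)$ to $\res\left(Q_{m,l}\,d\L\right)$ via the equivalence relations for $\res\left(\L^a\,d\log_\pm\L\right)$, and obtaining the recursion from the extra factor of $\L$)---is exactly the argument the paper intends, spelled out in more detail than the paper itself provides.
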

\begin{proof}
The proof is similar as the proof for original Toda flows.
\end{proof}

\section{Conclusion and discussion}

In this paper, because of the universality of additional symmetry for integrable systems and obstacles in the construction of the additional symmetry  induced by the extended flows in the ETH with a single logarithm which is a combination of $\log_+ \L$ and $\log_- \L$,  we modified the ETH and construct one new kind of integrable system named as the METH with two separated logarithmic flows. This new hierarchy has bi-Hamiltonian structure and tau symmetry. Under the modification, this new integrable hierarchy was proved to have a nice Block type additional symmetry. Because of the important application of ETH in topological field theory, the application of this new hierarchy on topological fields theory should be an interesting subject.

{\bf {Acknowledgements:}}
  Chuanzhong Li is supported by the National Natural Science Foundation of China under Grant No. 11201251, the Natural Science Foundation of Zhejiang Province under Grant No. LY12A01007,
 the Natural Science Foundation of Ningbo under Grant No. 2013A610105. Jingsong He is supported by the National Natural Science Foundation of China under Grant No. 11271210, K.C.Wong Magna Fund in
Ningbo University.

\vskip20pt

\end{document}